\documentclass[letterpaper]{article} 
\usepackage{aaai24}  
\usepackage{times}  
\usepackage{helvet}  
\usepackage{courier}  
\usepackage[hyphens]{url}  
\usepackage{graphicx} 
\urlstyle{rm} 
\usepackage{natbib}  
\usepackage{caption} 
\usepackage[font=small,labelfont=bf]{caption}
\usepackage{enumerate}

\frenchspacing  
\setlength{\pdfpagewidth}{8.5in} 
\setlength{\pdfpageheight}{11in} 
%
\usepackage{algorithm}
\usepackage{algorithmic}

%
\usepackage{newfloat}
\usepackage{listings}
\DeclareCaptionStyle{ruled}{labelfont=normalfont,labelsep=colon,strut=off} 
\lstset{%
	basicstyle={\footnotesize\ttfamily},
	numbers=left,numberstyle=\footnotesize,xleftmargin=2em,
	aboveskip=0pt,belowskip=0pt,%
	showstringspaces=false,tabsize=2,breaklines=true}
\floatstyle{ruled}
\newfloat{listing}{tb}{lst}{}
\floatname{listing}{Listing}
%
\pdfinfo{
/TemplateVersion (2024.1)
}
\setcounter{secnumdepth}{1}
\usepackage{amsmath,amssymb,amsthm}
\usepackage{thmtools,cleveref}
\usepackage{thm-restate}
\usepackage{tikz}
\usetikzlibrary{arrows,positioning,shapes,decorations,automata,backgrounds,petri,fit,calc}
\usepackage{soul}
\usepackage{lipsum}

\usepackage{paralist}

\usepackage{enumitem}

\newlist{inlineenum}{enumerate*}{1}
\setlist*[inlineenum]{mode=unboxed,label=(\arabic*)}

\newtheorem{theorem}{Theorem}[section]

\newtheorem{corollary}[theorem]{Corollary}
\newtheorem{proposition}[theorem]{Proposition}

\newtheorem{problem}{Problem}[section]


\theoremstyle{definition}
\newtheorem{example}{Example}[section]

\newcommand{\commentout}[1]{}

\newcommand{\bivector}[2]{\ensuremath{\left[\begin{smallmatrix} #1 \\ #2 \end{smallmatrix}\right]}}

\newcommand{\query}[1]{\textsc{#1}}
\newcommand{\mq}{\query{mq}}

\newcommand{\dr}{\query{dr}}

\newcommand{\true}{\textsc{t}}
\newcommand{\false}{\textsc{f}}
\newcommand{\sstart}{\textsc{s}}

\newcommand{\algo}[1]{\mathfrak{#1}}

\newcommand{\fsend}{f^{\textsf{!!}}}
\newcommand{\frec}{f^{\textsf{??}}}
\newcommand{\fst}{f^{\textsf{st}}}

\newcommand{\fact}{f^{\textsf{act}}}

\newcommand{\tree}[1]{\mathcal{#1}}

\newcommand{\state}[1]{q_{#1}}
\newcommand{\statep}[1]{p_{#1}}

\newcommand{\scstate}[1]{\textsc{#1}}
\newcommand{\scstatep}[1]{\textsc{#1}\textsc{'}}

\newcommand{\apart}[1]{\,\ensuremath{\#}_{#1}\,}

\title{Learning Broadcast Protocols}
\author{
Dana Fisman\equalcontrib\textsuperscript{\rm 1}\quad
Noa Izsak\equalcontrib\textsuperscript{\rm 1} \quad
Swen Jacobs\equalcontrib\textsuperscript{\rm 2}
}
\affiliations{
\textsuperscript{\rm 1}Ben-Gurion University,\\
\textsuperscript{\rm 2}CISPA Helmholtz Center for Information Security,\\
dana@cs.bgu.ac.il \quad
    izsak@post.bgu.ac.il \quad
    jacobs@cispa.de
}

\usepackage{bibentry}

\begin{document}

\maketitle

\begin{abstract}
The problem of learning a computational model from examples has been receiving growing attention.
For the particularly challenging problem of learning 
models of distributed systems, existing results are restricted to models with a \emph{fixed} number of interacting processes.
In this work we look for the first time (to the best of our knowledge) at the problem of learning a distributed system with an arbitrary number of processes, assuming only that there \emph{exists} a cutoff, i.e., a number of processes that is sufficient to produce all observable behaviors.
Specifically, we consider \emph{fine broadcast protocols}, these are 
broadcast protocols 
(BPs) with a finite cutoff and no hidden states. 
 We provide a learning algorithm that can infer a correct BP from a sample that is consistent with a fine BP, and a minimal equivalent BP if the sample is sufficiently complete.  
 On the negative side we show that (a) characteristic sets of exponential size are unavoidable, (b) 
 the consistency problem for fine BPs is NP hard, and (c)
 that fine BPs are not polynomially predictable.
\end{abstract}

\section{Introduction}
Learning computational models has a long history
starting with the seminal works of Gold~\shortcite{Gold67,Gold78} and Angluin~\shortcite{Angluin87}. Questions regarding learning computational models have raised a lot of interest both in the artificial intelligence community and the verification community. 
(Peled et al.~\shortcite{PeledVY02}, Vaandrager~\shortcite{Vaandrager17}).
Many results regarding the learnability of various computational models used in verification have already been obtained~
\cite{BeimelBBKV00}, \cite{BolligHLM13}, \cite{DeckerHLT14},
Angluin et al.~\shortcite{AngluinEF15}, Balle et al.~\shortcite{BalleM15}, Drews et al.~\shortcite{DrewsD17},
Roy et al.~\shortcite{RFN20}, Vaandrager et al.~\shortcite{VaandragerB021}, Saadon et al.~\shortcite{FismanS22},
 Nitay et al.~\shortcite{FismanNZ23}, Frenkel et al.~\shortcite{FismanFZ23}.

Particularly challenging is learning of concurrent computational models.
Compared to most sequential models, they offer another level of succinctness, and they usually have no unique minimal model.
Both of these aspects can make learning significantly more difficult.
Various results regarding learning concurrent models have already been obtained~\cite{BolligKKL10},\cite{AartsFKV15},
Esparza et al.~\shortcite{EsparzaLS11},
Muscholl et al.~\shortcite{MuschollW22}.
However, these results are limited to models with a fixed number of processes, and therefore cannot reliably learn models for (distributed) protocols that are expected to run correctly for \emph{any} number of processes.

\emph{Broadcast protocols} (BPs) are a powerful concurrent computational model, allowing the synchronous communication of the sender of an action with an arbitrary number of receivers~\cite{DBLP:conf/lics/EmersonN98}. 
In its basic form, this model assumes that communication and processes are reliable, i.e., it does not consider communication failures or faulty processes.
BPs have mainly been studied in the context of parameterized verification, i.e., 
proving functional correctness according to a formal specification, for all systems where an arbitrary number of processes execute a given protocol.

The challenge in reasoning about parameterized systems such as BPs is that a parameterized system concisely represents an infinite family of systems: for each natural number $n$ it includes the system where $n$ indistinguishable
processes interact. 
The system is correct only if it satisfies the specification for any number $n$ of processes interacting.
In the context of verification, a variety of approaches has been investigated to overcome this challenge. 

Some of these approaches are based on the notion of \emph{cutoff}, i.e., a number $c$ of processes such that a given property holds for any instance of the system with $n\!\geq \!c$ processes if and only if it holds for the cutoff system, where the \emph{cutoff system} is a system with exactly $c$ processes interacting.
In the literature, many results exist that provide cutoffs for certain classes of properties in a given computational model~\cite{DBLP:journals/ijfcs/EmersonN03,DBLP:conf/cade/EmersonK00}, Au{\ss}erlechner et al.~\shortcite{DBLP:conf/vmcai/AusserlechnerJK16}. 
Moreover, cutoffs also enable the \emph{synthesis} of implementations for parameterized systems from formal specifications~\cite{JacobsB14}, a problem closely related to learning.

In this paper, we develop a learning approach for BPs.
Given the expressiveness of BPs and the complexity of the general problem, we make some assumptions to keep the problem manageable.
In particular, we assume (1) that the BP under consideration has no hidden states, i.e., every state has at least one broadcast sending action by which it can be recognized; and {(2)} that there \emph{exists} a cutoff, i.e., a number $c$ such that the language (of finite words over actions) derived by $c$ processes is the same as the language derived by any number greater than $c$. We call such BPs \emph{fine}, and 
note that many BPs studied in the literature are fine. 
 
Moreover,
the restriction to fine BPs does not overly simplify the problem, as even under this assumption we obtain negative results for some basic learning problems.
We note that not all BPs have a cutoff (whether or not they have hidden states), and that when a cutoff exists the derived language is regular.
The fact that the derived language is regular also holds in previous works on learning concurrent models 
(communicating automata ~\cite{BolligKKL10}, workflow Petri nets (Esparza et al.~\shortcite{EsparzaLS11}), and negotiation protocols (Muscholl et al.~\shortcite{MuschollW22}))
 merely since a finite number of essentially finite state machines is in consideration.\footnote{The language of a BP in general need not be regular~\cite{DBLP:journals/tcs/FinkelS01,GeeraertsRB07} and this is true also with the restriction to no hidden states.}
We emphasize that this does not reduce the problem to learning a regular language, since the aim is to obtain the concurrent representation, which we show to be much more succinct than a  DFA for the language.  
 Moreover, the problem we consider goes way beyond what has been considered in previous works in the sense that
our approach works if \emph{there exists} a cutoff, but it does not require that the cutoff is known a priori, which is equivalent to the assumption of a (given) fixed number of processes in existing approaches.

We focus mainly on passive learning paradigms~\cite{delahiguera_2010}. Specifically, we consider the following problems: 
\begin{inparaenum}
\item \emph{Inference} --- given a sample consistent with a BP, return a BP that is consistent with the sample, 
\item \emph{Consistency} --- whether there exists a BP with at most $k$ states that agrees with a given sample,
\item \emph{Polynomial data} ---  whether characteristic sets are  of polynomial size,
and 
\item \emph{Polynomial Predictability} ---
whether
a learner  can correctly classify an unknown word with high probability after asking polynomially many membership and draw queries.
\end{inparaenum}

We prove a few properties of fine BPs relevant to learning in \S\ref{sec:properties}.
In \S\ref{sec:alg-inference}, we provide an inference algorithm that, given a sample of words that are consistent with a fine BP, infers a correct BP. 
 In \S\ref{sec:teachbility}, we show that the inference algorithm produces a minimal equivalent BP when the sample subsumes a characteristic set, and that characteristic sets of exponential size are unavoidable.
 In \S\ref{sec:conss-np-hard}, we show that consistency is NP-hard for the class of fine BPs.
Finally, in \S\ref{sec:BP-predict} we show that fine BPs are not polynomially predictable. Full proofs are available in the appendix.

\section{Preliminaries}

\subsection{Broadcast Protocols}
In the following we define broadcast protocols as introduced by 
 Emerson and Namjoshi~\shortcite{DBLP:conf/lics/EmersonN98} and studied in the seminal paper by Esparza et al.~\shortcite{EsparzaFM99}.
Broadcast protocols are one of the {most powerful} computational models for which some parameterized verification problems are still decidable, and are strictly more powerful than other standard communication primitives such as pairwise rendezvous or disjunctive guards~\cite{DBLP:conf/lics/EmersonK03}.

\paragraph{Broadcast Protocols (BPs)}
A \emph{broadcast protocol} $B=(S,s_0,L,R)$ consists of a {finite} set of states $S$ with an initial state $s_0 \in S$, a set of labels $L$
and a transition relation $R\subseteq S \times L \times S$, where $L = \{a!!, a?? \mid a \in A\}$ for some set of actions $A$. {A transition labeled with} $a!!$ is a broadcast \emph{sending  transition}, and a transition labeled with $a??$ is a broadcast \emph{receiving transition}, also called a \emph{response}.\footnote{Some models of BPs also consider rendezvous transitions, usually labeled with $a!$ and $a?$, but these can be simulated by broadcast transitions with a quadratic blowup in the number of states.} 
For each action $a\in A$, there must be exactly one outgoing response from every state.

    Given a BP $B=(S,s_0,L,R)$ we consider systems $B^n$, composed of $n$ identical processes that execute $B$. 
    Let $[n]$ denote the set $\{0,1,\ldots,n\}$.
    A \emph{configuration} of $B^n$ is a function $\textbf{q}: S \rightarrow [n]$, assigning to each state a number of processes.
    The \emph{initial configuration} $\textbf{q}_0$ is the configuration with $\textbf{q}_0(s_0)=n$ and $\textbf{q}_0(s)=0$ for all $s \neq s_0$.
    In a \emph{global transition}, all processes make a move: 
    One process takes a sending transition (labeled $a!!$), modeling that it broadcasts the value $a$ to all the others processes in the system.
    Simultaneously, all of the other processes take the receiving transition (labeled $a??$) from their current state.\footnote{We give a formal semantics of $B^n$ below.}

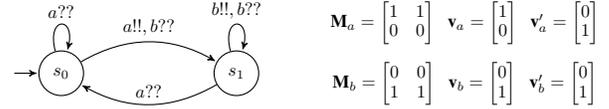
\begin{figure}
\centering
    \scalebox{0.675}{
    \begin{tikzpicture}[->,>=stealth',shorten >=1pt,auto,node distance=1.5cm,semithick,initial text=]
    
    \node[state,initial]  (s0)              {$s_0$};
    \node[state]         (s1) [ right of=s0, right= 1.5cm] {$s_1$};
    \node[label] (inner) [right  of=s1, node distance=4.5cm] {$~$};
    \node[label] (m0) [above of=inner, node distance=1cm]{$\textbf{M}_a =  
                \begin{bmatrix}
                1 & 1 \\
                0 & 0
                \end{bmatrix}$
\ \ $\textbf{v}_a = \begin{bmatrix}
                1  \\
                0
                \end{bmatrix}$
\ \ $\textbf{v}_a' = \begin{bmatrix}
                0 \\
                1
                \end{bmatrix}$
                
    };					
    \node[label] (m1) [below of=inner, node distance=.2cm]{$\textbf{M}_b =
                \begin{bmatrix}
                0 & 0 \\
                1 & 1
                \end{bmatrix}$
\ \ $\textbf{v}_b = \begin{bmatrix}
                0  \\
                1
                \end{bmatrix}$
\ \ $\textbf{v}_b' = \begin{bmatrix}
                0 \\
                1
                \end{bmatrix}$
                
    };

    \path (s0) edge      [bend left]             node {$a!!,b??$}    (s1);
    \path (s1) edge        [bend left, above]           node {$a??$}   (s0);		
    \path (s1) edge      [loop above]             node {$b!!,b??$}    (s1);
    \path (s0) edge [loop above] node {$a??$} (s0);
    \end{tikzpicture}}
\caption{Left: a simple BP. Right: its algebraic representation.}\label{bp3}
\end{figure}
    \commentout{        
    \begin{example}[BP representation]
    Consider the BP Fig.\ref{bp3}, state $s_0$ enables the sending-transition of action $a$, when $a$ is broadcasted a single processes moves from $s_0$ to $s_1$ while the rest of the processes respond with respect to their current state. Processes in state $s_0$ will respond by moving to $s_0$ ($s_0\xrightarrow[]{\text{a??}}s_0$), while those in $s_1$ will respond by moving to $s_0$.
    \end{example}}
    \begin{example}[A simple BP]
    Fig.\ref{bp3} (left) depicts 
     a simple BP $B$. 
    In the initial configuration of the system 
 $B^{9}$ we have $9$ processes in~$s_0$. 
    If a process broadcasts $a$, it moves to $s_1$ via the transition label $a!!$.
    The other processes respond following the $a??$ transition from their current state, 
    hence they remain
     in $s_0$. 
    Now we are in a configuration $\textbf{q}'$ with one process in $s_1$ and $8$ in $s_0$. 
    If from $\textbf{q}'$ another process broadcasts $a$, it moves from $s_0$ to $s_1$.
    The processes in $s_0$ stay there (following $a??$ from $s_0$), and the process in $s_1$ moves back to $s_0$ (following $a??$ from $s_1$). 
    Thus, we return to
    $\textbf{q}'$. If from $\textbf{q}'$ the process in $s_1$ broadcasts $b$ then it remains in $s_1$ (following $b!!$), 
    while all other processes move to $s_1$ via $b??$
    i.e., all processes will be in $s_1$.
    \end{example}

Following Esparza et al.~\shortcite{EsparzaFM99},
 we make the standard assumption that for each action $a$, there is a unique state $s_a$ {with an outgoing sending transition on $a!!$}. 
A state $s$ in a broadcast protocol is said to be \emph{hidden} if it has no outgoing sending transition. In this paper we consider broadcast protocols with no hidden states.
Note that the additional assumption of no hidden states is modest, since many examples from the literature satisfy it (e.g., the MESI protocol in Esparza et al.~\shortcite{EsparzaFM99} or the last-in first-served protocol in Delzanno et al.~\shortcite{DelzannoEP99}), and every protocol that does not satisfy this restriction can easily be modified to satisfy it without changing its functionality.

\paragraph{Semantics of $B^n$}
We can represent transitions of a system $B^n$ algebraically.
Assuming some ordering $s_0, s_1, ... s_{|S|-1}$ on the set of states $S$, we can identify configurations of $B^n$ with vectors from $[n]^{|S|}$, also called \emph{state-vectors}.
We use $\textbf{q}[i]$ to denote the entry in position $i$ of a state-vector \textbf{q}. 
For example, let $\textbf{u}_j$ be the unit vector with $\textbf{u}_j[j]{=}1$ and $\textbf{u}_j[i]{=}0$ for all $i \neq j$. Then the configuration where all $n$ processes are in $s_0$ is the vector $n \cdot \textbf{u}_0$. 
If $\textbf{q}$ is a state-vector with $\textbf{q}[i]\geq 1$ we say that $i$ is \emph{lit} in $\textbf{q}$. 
If state $s_i$ has an outgoing sending transition on action $a$ we say that $a$ is \emph{enabled} in $s_i$; if $i$ is lit in $\textbf{q}$ 
 we also say that $a$ is \emph{enabled} in $\textbf{q}$.

With each action $a$ we can associate 
\begin{inlineenum}
    \item 
    two unit-vectors $\textbf{v}_a = \textbf{u}_i$ for the origin and $\textbf{v}_a'=\textbf{u}_j$ for the destination, following 
    its sending transition $(s_i,a!!,s_j)$,  
    and  
    \item a \emph{broadcast matrix} $\textbf{M}_a$, which is an $|S|\times|S|$ matrix with $\textbf{M}_a(m,k)=1$ if there is a response $(s_k,a??,s_m) \in R$, and $\textbf{M}_a(m,k)=0$ otherwise. Every column of such a matrix is a unit vector.
\end{inlineenum}

Then {the transitions $T$ of $B^n$ are defined as follows:}
there is a transition between configurations $\textbf{q}$ and $\textbf{q'}$ on action $a$ in $B^n${, 
denoted $(\textbf{q},a,\textbf{q}') \in T$,}
iff there exists $(s_i,a!!,s_j)\in R$ with $\textbf{q}[i]\geq 1$ and:~
$\textbf{q}' = \textbf{M}_a \cdot (\textbf{q} - \textbf{v}_a) + \textbf{v}_a'.$

To see this, note that the state-vector $\textbf{q} - \textbf{v}_a$ corresponds to the sending process leaving the state $s_i$. The state-vector $\textbf{M}_a \cdot (\textbf{q} - \textbf{v}_a)$ describes the situation after the other processes take the responses on $a$. Finally,
$\textbf{q}'$ is the resulting state-vector after the sending process arrives at its target location.

\begin{example}
            Consider
            again the BP in Fig.\ref{bp3}, depicted with the broadcast matrices for the two actions $\textbf{M}_a$ and $\textbf{M}_b$,
            and the associated origin and destination vectors
            $\textbf{v}_a$, $\textbf{v}'_a$, $\textbf{v}_b$, $\textbf{v}'_b$.
            In configuration $\textbf{q} = \bivector{2}{2}$, $a$ is enabled ($s_0$ is lit).
                Computing the effect of $a$, we first get $\textbf{q} - \textbf{v}_a = \bivector{1}{2}$, then  
            $\textbf{M}_a \cdot 
                \bivector{1}{2} =
                \bivector{3}{0}$,
            and finally, $\textbf{q}' = \bivector{3}{0}
                 + \textbf{v}'_a = \bivector{3}{1}
                $. 

\end{example}

An \emph{execution} of $B^n$ is a finite sequence $e=\!\textbf{q}_0, a_1,\allowbreak \textbf{q}_1, a_2,  \ldots\!,\allowbreak a_{m}, \textbf{q}_m$ such that
$(\textbf{q}_i,a_{i+1},\textbf{q}_{i+1}) \!\in\! T$
for every $i \in [m\!-\!1]$. We say that $e$ is \emph{based on} the sequence of actions $a_1, \ldots, a_{m}$ and that $B^n(a_1\ldots a_m)=\textbf{q}_m$.
We say that a word $w \in A^*$ is \emph{feasible} in $B^n$ if there is an execution of $B^n$ based on $w$.
The \emph{language} of $B^n$, denoted $L(B^n)$, is the set of all words that are feasible in $B^n$, 
and the language of $B$, denoted $L(B)$, is the union of $L(B^n)$ over all $n\in\mathbb{N}$.

Let $B_1$ and $B_2$ be two BPs. 
We say that $B_1$ and $B_2$ are \emph{equivalent} iff $L(B_1){=}L(B_2)$.
A BP $B$ is said to have a \emph{cutoff} $k\in\mathbb{N}$ if for any $k'> k$ it holds that $L(B^k)=L(B^{k'})$.
A BP with no hidden states is termed \emph{fine} if it has a cutoff.
We measure the size of a BP by its number of states.
Thus, a BP is termed \emph{minimal} if there is no equivalent BP with fewer states.
Note that unlike the case of DFAs, there is no unique minimal fine BP, as shown by the following example. 

\begin{example}\label{ex:two-non-iso-bps}
Fig.\ref{fig:two-non-iso-bps} shows two BPs $B_1$ and $B_2$.
Note that $L(B_1^1)=a^*$, since with a single process, $a$ is the only possible action from $\textbf{q}_0$, and we arrive in $\textbf{q}_0$ after executing it.
With $2$ processes, after executing $a$ we arrive in state-vector $\bivector{1}{1}$, and we can execute either $a$ or $b$, and each of them brings us to $\bivector{1}{1}$ again.
Therefore, $L(B_1^2)=a(a\cup b)^*$.
Moreover, adding more processes does not change the language, i.e., $L(B_1^n)=L(B_1^2)$ for all $n$.
Hence $L(B_1)=L(B_1^2)$ and the cutoff of $B_1$ is $2$.
Similarly, we can show that $L(B_2)=a(a\cup b)^*$ and the cutoff of $B_2$ is $2$.
Note that $B_1$ and $B_2$ are not isomorphic, but they are equivalent.

\end{example}
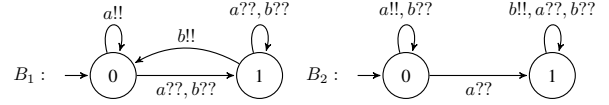
\begin{figure}
\centering
	\scalebox{0.695}{
		\begin{tikzpicture}[->,>=stealth',shorten >=1pt,auto,node distance=2.8cm,semithick,initial text=]
		
		\node[state,initial]  (s0)              {$\scstate{0}$};
		\node[state]         (s1) [right of=s0] {$\scstate{1}$};
            \node[label]          (ls) [left of=s0, node distance=1.6cm] {$B_1:$};
		\node[state,initial] (t0) [right of=s1] {$\scstate{0}$};
		\node[state]         (t1) [right of=t0] {$\scstate{1}$};		 \node[label]          (lt) [ left of=t0, node distance=1.6cm] {$B_2:$};				

		\path (s0) edge      [below]             node {$a??, b??$}    (s1);
		\path (s1) edge        [bend right, above]           node {$b!!$}   (s0);
		\path (s0) edge [loop above]  node {$a!!$} (s0);
		\path (s1) edge [loop above] node {$a??, b??$} (s1);					

		\path (t0) edge      [below]             node {$a??$}    (t1);
		\path (t0) edge [loop above]  node {$a!!, b??$} (t0);
		\path (t1) edge [loop above] node {$b!!, a??, b??$} (t1);					
		\end{tikzpicture}}
\caption{Two non-isomorphic fine BPs: $L\!(\!B_1\!)\!=\!L\!(\!B_2\!)\!=\!a(\!a\cup b\!)^*$.}\label{fig:two-non-iso-bps}
\end{figure}

{We note that the aforementioned examples from the literature (Esparza et al.~\shortcite{EsparzaFM99} and Delzanno et al.~\shortcite{DelzannoEP99}) also have a cutoff, and thus are fine BPs.}
\subsection{Learning problems}
A \emph{sample} for a BP $B$ is a set $\mathcal{S}$ of triples in $A^*\times\mathbb{N}\times \mathbb{B}$ where $\mathbb{B}=\{\true,\false\}$. A triple $(w,n,\true)$ (resp. $(w,n,\false)$) is consistent with $B$ if $w$ is feasible (resp. infeasible) in $B^n$. A sample is \emph{consistent} with $B$ if all triples in it are consistent with $B$. The size of $\mathcal{S}$ is defined as the sum of length of words in it.

We consider the following problems related to learning a class $\mathcal{C}$ of computational models, phrased for BPs.

\begin{problem}[Inference]
Devise an algorithm 
that given a sample $\mathcal{S}$ that is consistent with some BP in $\mathcal{C}$ returns a BP $B\in\mathcal{C}$ that is consistent with $\mathcal{S}$. We refer to such an algorithm as an \emph{inference algorithm}.
\end{problem}
Obviously one would prefer the returned BP to be minimal or sufficiently small. Phrased as a decision problem this is the consistency problem.

\begin{problem}[Consistency]
Given a sample $\mathcal{S}$ and  $k\in\mathbb{N}$  determine whether there exists a BP $B\in\mathcal{C}$ consistent with $\mathcal{S}$ with at most $k$ states. 
\end{problem}

The consistency problem is NP-hard even for DFAs~\cite{Gold78}. Thus, inference algorithms are often based on SAT or SMT solvers 
In \S\ref{sec:alg-inference} we provide such an inference algorithm for fine BPs.
Note that in many cases it is possible to devise a trivial inference algorithm (e.g., for DFAs the prefix-tree automaton) that is correct on the sample but makes no generalization and does not attempt to minimize the returned result.
We show in \S\ref{sec:teachbility} that if the sample is sufficiently complete (subsumes a \emph{characteristic set}), the inference algorithm we provide in fact returns the minimal BP that agrees with the sample. One may thus ask how large should a characteristic set be. We show that, unfortunately, it can be of size exponential in the number of states of the BP.

\begin{problem}[Polynomial data]
    Does there exist an inference algorithm $\algo{A}$ such that 
    one can associate with every BP $B\in\mathcal{C}$ a sample $\mathcal{S}_B$ of size polynomial in $B$ so that $\algo{A}$ correctly infers $L(B)$ from $\mathcal{S}_B$ or any sample subsuming it.
\end{problem}

\commentout{
The notion of \emph{learning in the limit using polynomial time and data}
a class $\mathcal{C}$ asks for the existence of a polynomial time inference algorithm $\algo{A}$ for $\mathcal{C}$  such that
for every $B\in\mathcal{C}$ it is possible to construct a sample $\mathcal{S}_B$ of size polynomial in the size of $B$ such that $\algo{A}$ returns a BP $B'$ for which $L(B')=L(B)$ when applied to any sample $\mathcal{S}$ that subsumes $\mathcal{S}_B$ and is consistent with $B$. {When this holds the sample $\mathcal{S}_B$ is termed the \emph{characteristic set} (CS) for $B$.}}
The last problem we consider is in the active learning paradigm. Its definition is quite long and deferred to \S\ref{sec:BP-predict}.

\begin{problem}
[Polynomial Predictability]
Can
a learner correctly classify an unknown word with high probability after asking polynomially many membership and draw queries.
\end{problem}
\section{Properties of Broadcast Protocols}
\label{sec:properties}
    Below we establish some properties regarding broadcast protocols that
    will be useful in devising the learning algorithm.  The proofs are given in Appendix~\ref{app:proofs-properties}.

It is not hard to see that the set of feasible words $L(B)$ of a given BP is prefix-closed. That is, if $uv$ is feasible for $u,v\in A^*$ then $u$ is feasible as well. Additionally, if $w\in A^*$ is feasible in $B^k$ then $w$ is feasible in $B^\ell$ for all $\ell > k$.  
    
    \begin{restatable}[Prefix-closedness and monotonicity]{lemma}{prefclose}\label{clm:monotonicity}
    If $B$ is a BP then $L(B)$ is prefix-closed. Moreover, $L(B^k)\subseteq L(B^\ell)$ for all $\ell > k$.
    \end{restatable}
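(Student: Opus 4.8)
The statement has two parts, and I would treat them separately. Prefix-closedness is essentially bookkeeping: since $L(B)=\bigcup_n L(B^n)$, if $uv$ is feasible then it is feasible in some $B^n$, witnessed by an execution $\textbf{q}_0,a_1,\textbf{q}_1,\ldots,a_m,\textbf{q}_m$ based on $uv$. Writing $j=|u|$, the truncated sequence $\textbf{q}_0,a_1,\ldots,a_j,\textbf{q}_j$ is itself an execution of $B^n$ (it starts at the initial configuration $\textbf{q}_0$ and every one of its steps is a transition in $T$), and it is based on $u$. Hence $u\in L(B^n)\subseteq L(B)$. No real obstacle here.

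The substantive claim is monotonicity, and my plan is a direct ``lifting'' of a $B^k$-execution to $B^\ell$ by adding $d:=\ell-k$ spectator processes that all start in $s_0$. Given a witnessing execution $\textbf{q}_0,a_1,\ldots,a_m,\textbf{q}_m$ of $B^k$ based on $w$, I would define a candidate $B^\ell$-execution of the form $\textbf{r}_t=\textbf{q}_t+\textbf{e}_t$, where $\textbf{e}_0=d\cdot\textbf{u}_0$ records the extra processes and each $\textbf{e}_t$ is to be shown a nonnegative integer vector with entries summing to $d$. The goal is to prove that, with the update $\textbf{e}_{t+1}=\textbf{M}_{a_{t+1}}\,\textbf{e}_t$, the sequence $\textbf{r}_0,a_1,\ldots,a_m,\textbf{r}_m$ is a legal execution of $B^\ell$ on the \emph{same} word $w$.

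The heart of the argument is the observation that the global transition map is affine and therefore splits additively over $\textbf{q}_t$ and $\textbf{e}_t$. Writing $a=a_{t+1}$, linearity of $\textbf{M}_a$ gives
\[
\textbf{r}_{t+1}=\textbf{M}_a(\textbf{r}_t-\textbf{v}_a)+\textbf{v}_a'=\bigl(\textbf{M}_a(\textbf{q}_t-\textbf{v}_a)+\textbf{v}_a'\bigr)+\textbf{M}_a\,\textbf{e}_t=\textbf{q}_{t+1}+\textbf{M}_a\,\textbf{e}_t,
\]
so the spectators evolve independently of the original processes via $\textbf{e}_{t+1}=\textbf{M}_a\,\textbf{e}_t$. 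I would then invoke the two structural facts about broadcast matrices stated in the preliminaries: every column of $\textbf{M}_a$ is a unit vector, hence $\textbf{M}_a$ maps nonnegative integer vectors to nonnegative integer vectors and preserves the entry-sum. By induction this keeps each $\textbf{e}_t\geq 0$ with total $d$, so each $\textbf{r}_t=\textbf{q}_t+\textbf{e}_t$ is a genuine configuration of $B^\ell$ (total $k+d=\ell$). Finally, enabledness is inherited for free: if $a_{t+1}$ has sender state $s_i$, then $\textbf{q}_t[i]\geq 1$ because the original execution fired $a_{t+1}$, whence $\textbf{r}_t[i]=\textbf{q}_t[i]+\textbf{e}_t[i]\geq 1$. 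Thus every step of the lifted sequence is a valid $B^\ell$-transition and $w\in L(B^\ell)$.

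I expect the only delicate point to be the clean separation of the affine update into its $\textbf{q}$- and $\textbf{e}$-parts together with the correct use of the column-unit-vector property to guarantee both nonnegativity and conservation of the process count; once that decomposition is in place, enabledness and the inductive invariant follow immediately, so there is no genuine case analysis or combinatorial difficulty to overcome.
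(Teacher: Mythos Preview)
Your proposal is correct and follows essentially the same strategy as the paper: lift a $B^k$-execution to $B^\ell$ by adding extra processes and maintain a pointwise-domination invariant that guarantees every action stays enabled. The paper's proof is terser---it simply asserts the inductive invariant $\textbf{q}_i \geq \textbf{p}_i$ and that enabledness is preserved---whereas you make the same idea explicit by tracking the spectator vector $\textbf{e}_t$ and using the affine decomposition $\textbf{r}_{t+1}=\textbf{q}_{t+1}+\textbf{M}_a\textbf{e}_t$ together with the column-unit-vector property; this extra detail is sound and in fact fills in exactly what the paper leaves implicit.
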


The following lemma asserts that if $wa$ is feasible with $n$ processes but not with $m<n$, while $w$ is feasible with $m$ processes (for $w\in A^*$ and $a\in A$)
then $wa$ must be feasible with $m+1$ processes. 
Intuitively this is since $m$ processes are enough to execute all actions in $w$, and therefore any additional processes will take only receiving transitions along $w$, and will all arrive in the same local state. 
Thus, if $a$ is not enabled after $w$ with $m+1$ processes, then the additional process did not lit the state $s_a$ enabling $a$, and the same is true if we add any bigger number of processes.

\begin{restatable}[Step by step progress]{lemma}{stepstep}
\label{lem:progress-ce} 
Let $w\in A^*$, $a\in A$, and 
 $m<n$. 
If $w \in L(B^m)$ and $wa \notin L(B^m)$ yet $wa \in L(B^n)$, then $wa \in L(B^{m+1})$.
\end{restatable}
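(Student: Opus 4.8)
The plan is to exploit the determinism of the configuration-level semantics. Because each action $a$ has a unique sending state $s_a$, and every state has exactly one response per action, the relation $T$ is in fact a partial function on configurations: once the action sequence is fixed there is no branching over ``which process sends,'' so for every process count $\ell$ the configuration $B^{\ell}(w)$ reached after a feasible word $w$ is uniquely determined. Consequently $wa \in L(B^{\ell})$ holds iff $w \in L(B^{\ell})$ and $s_a$ is lit in $B^{\ell}(w)$, i.e. $B^{\ell}(w)[s_a]\ge 1$. Writing $w = a_1\cdots a_d$, I first record via \Cref{clm:monotonicity} that $w$ and all its prefixes are feasible in $B^{\ell}$ for every $\ell \ge m$, so all the configurations below are well defined.

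The technical core is to track what the ``extra'' processes do along $w$. I would define the pure-receiver trajectory $r_0,r_1,\dots,r_d$ by $r_0=s_0$ and $r_{i+1}$ the unique state reached from $r_i$ on the response to $a_{i+1}$; equivalently $\mathbf{u}_{r_{i+1}} = \mathbf{M}_{a_{i+1}}\mathbf{u}_{r_i}$, since column $r_i$ of the broadcast matrix is exactly that destination unit vector. I then claim, and prove by induction on the prefix length, that $B^{\ell+1}(w_i) = B^{\ell}(w_i) + \mathbf{u}_{r_i}$ for every $\ell \ge m$ and every prefix $w_i = a_1\cdots a_i$. The base case $i=0$ is immediate, as $(\ell+1)\mathbf{u}_0 = \ell\,\mathbf{u}_0 + \mathbf{u}_{r_0}$. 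For the step, applying the transition map $\mathbf{M}_{a_{i+1}}(\,\cdot - \mathbf{v}_{a_{i+1}}) + \mathbf{v}'_{a_{i+1}}$ to $B^{\ell}(w_i)$ and to $B^{\ell}(w_i)+\mathbf{u}_{r_i}$, the two outputs differ by exactly $\mathbf{M}_{a_{i+1}}\mathbf{u}_{r_i} = \mathbf{u}_{r_{i+1}}$, where enabledness of $a_{i+1}$ at both configurations is inherited from monotonicity. Iterating from $m$ up to $\ell$ yields $B^{\ell}(w)[s_a] = B^{m}(w)[s_a] + (\ell-m)\,\mathbf{u}_{r_d}[s_a]$ for all $\ell \ge m$.

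With this formula the conclusion is a short calculation. Since $wa\notin L(B^m)$ and all entries are nonnegative integers, $B^m(w)[s_a]=0$. Hence for every $\ell>m$ we have $B^{\ell}(w)[s_a] = (\ell-m)\,\mathbf{u}_{r_d}[s_a]$, which is positive precisely when $r_d = s_a$. The hypothesis $wa\in L(B^n)$ with $n>m$ forces this quantity to be positive, so $r_d = s_a$; substituting $\ell = m+1$ then gives $B^{m+1}(w)[s_a]=1\ge 1$, i.e. $a$ is enabled in $B^{m+1}(w)$ and $wa\in L(B^{m+1})$, as required.

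I expect the main obstacle to be the lurking worry that the additional process in $B^{m+1}$ might enable $a$ by \emph{sending} at some step rather than by merely receiving along $w$, so that a cleverer schedule could deposit a process in $s_a$ even when the pure receiver lands elsewhere. The determinism observation is exactly what dispels this: at the configuration level there is no scheduling freedom once $w$ is fixed, so the entire contribution of one extra process is the single summand $\mathbf{u}_{r_i}$, and whether $s_a$ ever becomes lit is decided solely by the deterministic receiver endpoint $r_d$. Pinning down this determinism, and checking that the transition formula applies identically in $B^{\ell}$ and $B^{\ell+1}$ because enabledness transfers via \Cref{clm:monotonicity}, is the one place where care is needed.
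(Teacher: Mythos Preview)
Your proof is correct and follows essentially the same idea as the paper's: the extra processes beyond the first $m$ all follow the same deterministic receiver-only trajectory along $w$, so either the first extra process already lights $s_a$ after $w$ or no number of extra processes ever will. The paper phrases this as a brief contrapositive (if $wa\notin L(B^{m+1})$ then the additional process did not land in $s_a$, and any further additional process behaves identically, contradicting $wa\in L(B^{n})$), whereas you give the direct algebraic version establishing $B^{\ell}(w)=B^{m}(w)+(\ell-m)\mathbf{u}_{r_d}$; the underlying mechanism is identical.
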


Recall that fine BPs have no canonical minimal representation, in the sense that, as shown in Ex.\ref{ex:two-non-iso-bps}, there could be two non-isomorphic BPs for the same language. The lack of a canonical minimal representation often makes it difficult to achieve a learning algorithm.
The following important lemma asserts, that while two minimal fine BPs may be non-isomorphic there is a tight correlation between them. 

Since every action is enabled by a unique state, and  every state enables at least one action,
in a minimal fine BP the set $A$ is partitioned between states, and if there is a state $s_1$ in  $B_1$ whose set of enabled actions is $A'=\{a_{i_1},a_{i_2},\ldots,a_{i_k}\}$ then there should be a state $s_2$ in $B_2$ for which the set of enabled actions is exactly $A'$.  
So we can define such a mapping between the states of two minimal fine BPs, and it must be that on every word $w$ if $\textbf{p}_w$ and $\textbf{q}_w$ are the state-vectors $B_1$ and $B_2$ reach after reading $w$, resp., then if state $s_1$ is lit in $\textbf{p}_w$ then the corresponding state $s_2$ (that agrees on the set of enabled actions) is lit in $\textbf{q}_w$. 
Moreover, the fact that $L(B_1)=L(B_2)$ guarantees that $L(B_1^m)=L(B_2^m)$ for any $m\in\mathbb{N}$. This bundle of claims can be proven together by induction first on the number of processes $m$, and second the length of the word $w$.  
In the following 
we use $\fact(s)=A'$ if $A'$ is the set of actions enabled in $s$.

\begin{restatable}[Relation between minimal fine equivalent BPs]{lemma}{twomin}\label{lem:iso}
Let $B_1$ and $B_2$ be minimal fine BPs with states $S_1$ and $S_2$ such that $L(B_1){=}L(B_2)$.
Then for every $m\in\mathbb{N}$ it holds that
$L(B_1^m){=}L(B_2^m)$ and
there exists a bijection $h:S_1\to S_2$ 
satisfying that 
$\fact(s){=}\fact(h(s))$ 
for any $s{\in} S_1$; and for any 
 $w\in A^*$
if $B_1^m(w)=\textbf{p}_w$ and $B_2^m(w)=\textbf{q}_w$
then
$\textbf{p}_{w}[i]$ is lit if and only if $\textbf{q}_{w}[h(i)]$ is lit, for every  state $i$. 
\end{restatable}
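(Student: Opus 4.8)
The plan is to prove all three assertions simultaneously by a nested induction, the outer on the number of processes $m$ and the inner on the length of $w$, as the surrounding text suggests. Before the induction I fix the candidate bijection. Since each action $a$ has a unique enabling state and, by the no-hidden-states assumption, every state enables at least one action, the map $s\mapsto\fact(s)$ partitions $A$ into exactly $|S_1|$ (resp.\ $|S_2|$) nonempty blocks in $B_1$ (resp.\ $B_2$). I would first argue that these two partitions of $A$ coincide; then $h$ is forced to be the unique map sending the state of $B_1$ owning a block to the state of $B_2$ owning the same block, which is automatically a bijection with $\fact(s)=\fact(h(s))$ and which sends the enabling state of $a$ in $B_1$ to that of $a$ in $B_2$. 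The anchor for the partition claim is that the length-one feasible words of $B^n$ are exactly $\fact(s_0)$ for every $n\geq 1$, so $L(B_1)=L(B_2)$ already yields $\fact(s_0^{(1)})=\fact(s_0^{(2)})$; the remaining blocks are matched as a by-product of the lit-correspondence below, together with minimality, which ensures that every state is lit in some reachable configuration (an unreachable or redundant state could be eliminated, contradicting minimality).

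For the induction I would phrase the invariant at level $m$ as: for every $w$, $w$ is feasible in $B_1^m$ iff it is feasible in $B_2^m$, and in that case the lit-sets of $\textbf{p}_w=B_1^m(w)$ and $\textbf{q}_w=B_2^m(w)$ correspond under $h$; equivalently, since $\fact$ partitions $A$, the set of actions enabled after $w$ agrees in both systems. The inner step treats $w=w'a$: prefix-closedness (Lemma~\ref{clm:monotonicity}) reduces $w'$ to the inner hypothesis, so the lit-sets after $w'$ already correspond. Feasibility of $w'a$ in $B_i^m$ is equivalent to $a$ being enabled after $w'$, i.e.\ to $s_a^{(i)}$ being lit, and the lit-correspondence for $w'$ transfers this condition between the two systems, giving feasibility coincidence for $w'a$. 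It then remains to transfer the lit-correspondence across the global $a$-transition, i.e.\ to compare $\textbf{p}_{w'a}=\textbf{M}_a(\textbf{p}_{w'}-\textbf{v}_a)+\textbf{v}_a'$ with the analogous update in $B_2$.

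This last transfer is the crux, and the reason the induction must also range over $m$. Whether a state $k$ is lit in $\textbf{p}_{w'a}$ depends not only on which states are lit in $\textbf{p}_{w'}$ and on the response structure of $a$, but also on the multiplicity at the sending state $s_a$: that state loses exactly one process, so it becomes unlit precisely when it carried a single one. Hence the evolution of lit-sets is not a bisimulation, and $\fact$-agreement of $h$ alone does not propagate the correspondence. To handle this I would exploit the outer induction together with Lemma~\ref{lem:progress-ce}: comparing $B^m$ with $B^{m+1}$, when $m$ processes already suffice to realize $w'$ the additional process only ever responds and ends in a determined local state, which controls exactly the borderline multiplicities; and when $w'a\in L(B_1^m)$ is a priori known only to lie in $L(B_2^n)$ for some $n>m$ (because $L(B_1)=L(B_2)$ equates the unions, not the individual levels), the step-by-step lemma together with the level-$m$ lit-correspondence on the $B_1$ side forces $a$ to be enabled already at level $m$ in $B_2$, collapsing $n$ down to $m$.

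Putting these together, each global transition in $B_1^m$ is mirrored by the corresponding transition in $B_2^m$ so that the reached lit-sets again match under $h$; the inner induction then yields $L(B_1^m)=L(B_2^m)$ and the lit-correspondence at level $m$, and the outer induction lifts this to all $m$. Finally, ranging $w$ over all feasible words and $m$ up to the cutoff exhibits every block of the partition as the enabled set of a lit state, which, with minimality, certifies that $h$ matches all blocks and is the desired bijection. I expect the multiplicity bookkeeping in the transition-transfer step --- coordinating the two inductions so that the ``does $s_a$ stay lit'' question is always resolved by a strictly smaller instance, and simultaneously breaking the apparent circularity between defining $h$ and proving the correspondence --- to be the main technical difficulty.
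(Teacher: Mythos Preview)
Your plan matches the paper's proof in its essential ingredients: a nested induction on $m$ and the length of $w$, the use of Lemma~\ref{lem:progress-ce} to collapse the level at which a newly enabled action first appears, and the observation that the additional process in $B^{m+1}$ versus $B^m$ only ever responds and therefore lands in a single determined local state. The paper, however, organizes the outer step differently in a way that neatly sidesteps the multiplicity bookkeeping you flag as the main difficulty. Rather than fixing $m+1$ and running an inner induction on $w=w'a$ (which forces you to push the lit-correspondence across a global $a$-transition and hence to reason about the response structure and the exact count at $s_a$), the paper compares the \emph{same} word $w$ at levels $m$ and $m{+}1$: since $\textbf{p}'_w$ differs from $\textbf{p}_w$ in exactly one coordinate $j$ (the landing state of the extra responder), the lit-correspondence at level $m{+}1$ follows immediately from the level-$m$ hypothesis when $\textbf{p}_w[j]\geq 1$, and from Lemma~\ref{lem:progress-ce} applied on the $B_2$ side when $\textbf{p}_w[j]=0$. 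No transition-by-transition transfer is needed beyond the $m=1$ base case.

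The other organizational difference is that the paper does not fix $h$ up front but defines it incrementally, setting $h(j)=k$ the first time a state $j$ is discovered to be newly lit (with $k$ the corresponding newly lit state on the $B_2$ side). This dissolves the apparent circularity you note between defining $h$ and proving the correspondence: well-definedness of $h$ and the $\fact$-agreement are by-products of the induction rather than prerequisites for it, and minimality is invoked only implicitly to ensure every state eventually gets hit. Your approach would also go through, but the paper's packaging avoids the ``main technical difficulty'' you anticipated.
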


\section{Inferring a BP from a Sample}
\label{sec:alg-inference}
Let $\mathcal{S}$ be a sample. The inference algorithm $\algo{I}$ we devise constructs a BP $B_\mathcal{S}$ that agrees with $\mathcal{S}$. 

Let $A_\mathcal{S}$ be the set of actions that appear in $\mathcal{S}$ in at least one feasible word. 
In order to return a BP {$B_\mathcal{S}$} with no hidden states, we allow $B_\mathcal{S}$ to have a set of actions $A \supseteq A_\mathcal{S}$.%
\footnote{Note that in \S\ref{sec:CSgen} we show that it is enough to consider $A=A_\mathcal{S}$ if $\mathcal{S}$ is sufficiently complete.}
We use $S$ for the set of states of $B_\mathcal{S}$, and $s_0$ for its initial state.

We construct a set of constraints that define the BP $B_\mathcal{S}$. 
More precisely, we construct a set of constraints $\Psi_\mathcal{S}$ regarding the behavior of three partial functions $\fst:\! A \!\rightarrow \!S$, $\fsend: \!A \rightarrow\! S$, and  $\frec_a:\! S\! \rightarrow \!S$ for every $a \!\in \!A$
so that any valuation of these functions that satisfies $\Psi_\mathcal{S}$ implement a BP consistent with the sample.
{Formally,} we say that {functions $\fst, \fsend, \{\frec_a \mid a \in A\}$ \emph{implement}} a BP $B=(S,s_0,L,R)$
if for every $(s_i,a!!,s_j)\in R$ we have $\fst(a)=s_i$, $\fsend(a)=s_j$ and for every $(s_i,a??,s_j)\in R$ we have $\frec_a(s_i)=s_j$. We also use $\fact(s)=A'$ if $A'=\{ a\in A ~|~ \fst(a)=s\}$.

We turn to introduce some terminology regarding the sample.
Let $\mathcal{P}_{i}$ be the set of words $\{ w \colon\! (w,i,\true)\!\in \!\mathcal{S}\}$, and let  
$\mathcal{N}_{i}$ be $\{ w \colon\! (w,i,\false)\!\in \!\mathcal{S}\}$.
Note that by Lem.\ref{clm:monotonicity} it follows
that if $w \!\in\! \mathcal{P}_i$
then $w$ is feasible in $B^j$ for every $j\!\geq \!i$. Similarly, if
$w \!\in\!\mathcal{N}_i$,
then $w$ is infeasible in $B^j$
for every~$j\!\leq \!i$. We define $\mathcal{N}$ (resp. $\mathcal{P}$) as the union of all $\mathcal{N}_i$'s (resp. $\mathcal{P}_i$'s).

We define a relation between actions $a,b\in A_\mathcal{S}$ as follows. 
We say that $a\apart{\mathcal{S}} b$ if there exist a word $w\in A_\mathcal{S}^*$ and naturals $n'\geq n$ such that 
$(wa,n,\true)\in\mathcal{S}$ and $(wb,n',\false)\in\mathcal{S}$
or vice versa (switching the roles of $a$ and $b$).
Following Lem.\ref{lem:iso}, $a\apart{\mathcal{S}} b$ means that the sample $\mathcal{S}$ has information contradicting that $a$ and $b$ are enabled in the same state.
\begin{enumerate}
\item 
\label{cnstr:apart}
Our first constraints are therefore that for every $a,b\in A$ such that $a \apart{\mathcal{S}} b$ it holds that $\fst(a)\neq\fst(b)$.

\item 
\label{cnstr:no-hidden}
Since we identify states by the set of actions they enable and we assume there are no hidden states, we define the set of states $S = \{ \fst(a) \colon  a \in A \}$ as the set of terms $\fst(a)$.
This definition guarantees that no states are hidden. Moreover, we designate one of the states, that we term $s_0$, as the initial state: $\exists s\in S \colon s =s_0$.
\commentout{
Since we assume there are no hidden states, 
we need to demand that at least one action is associated with every state. 
We first collect the set of states as terms $\fst(a)$ that are the target of a sending or a receiving transition, and add to them the initial state, as follows:
$$\begin{array}{ll}
S\, =\! & \{ \fst(a) \colon \exists a, b\in A.\ \fsend(b)=\fst(a) \}\ \cup \\
& \{ \fst(a) \colon \exists a,b,c\in A.\ \frec_{c}(\fst(b))=\fst(a) \}\ \cup \\
& \{ \fst(a) \colon a \in \mathcal{P}_1 \}
\end{array}$$
Note that the last set refers to the initial state, since for a word of length one, $a$ must be fired from the initial state.
Now we demand that each state has an action:
$$\forall s \in S: \exists a \in A: \fst(a)=s.$$
}

\item \label{cnstr:initial}
The rest of the constraints are gathered by scanning the words first by length.
For every word of length one, i.e. action $a$, if $au \in \mathcal{P}_i$  for some $i$ and some $u\in A^*$ then we add $\fst(a)=s_0$, and if $a \in \mathcal{N}_i$ then we add $\fst(a)\neq s_0$.

\item \label{constr:longword-eq1}
Next, we scan inductively for every word $w \in \mathcal{P}$ for the minimal $i \geq 1$ such that $w \in \mathcal{P}_i$ and for every $w \in \mathcal{N}$ for the maximal $i \geq 1$ such that $w \in \mathcal{N}_i$.

In the base case $i{=}1$ we have
$w \in \mathcal{P}_{1} \cup \mathcal{N}_{1}$. Let $w=\allowbreak{a_1a_2{\ldots} a_m}$.
 We define $m{+}1$ variables
    $p_{0},p_{1},\ldots p_{m}$.
    The variable $p_{k}$ indicates the state the process reaches after the system reads $a_1\ldots a_{k}$, and $p_{0}$ should be $s_0$. 

If $w \in \mathcal{P}_1$, we add the constraint $\psi_{w,1}$ defined as 
$$\begin{array}{l}\left(p_0=s_0 \right) \wedge \bigwedge_{1\leq \ell \leq m} \left( p_{\ell-1} = \fst(a_{\ell}) \wedge  p_{\ell} = \fsend(a_{\ell})\right)
\end{array}$$

    This requires that the next letter $a_{\ell}$ is enabled in the state the process reached after $a_1a_2\ldots a_{\ell-1}$  was executed.
    
     If ${w\in\mathcal{N}_1}$  we add the following constraint
    $$\begin{array}{l}\displaystyle\bigvee_{0\leq \ell <m} \left(  \psi_{w[..\ell],1} \wedge p_\ell \neq \fst(a_{\ell+1}) 
    \right)
\end{array}$$

    where $w[..\ell]$ denotes the $\ell$'th prefix of $w$, namely $a_1a_2\ldots a_{\ell}$, and we let $\psi_{\epsilon,1}=\true$.
    This requires that at least one of the letters in the word is not enabled in the state the process reached, implying the entire word is infeasible with one process. 

    \item \label{cnstr:huge} 
    For the induction step $i>1$, let
     $w\in \mathcal{P}_{i}\cup \mathcal{N}_{i}$ and assume $w=a_1a_2\ldots a_m$. 
    We define $i(m+1)$ variables
    $p_{1,0},p_{2,0},\ldots p_{i,m}$.
    The variable $p_{j,k}$ indicates the state the $j$-th process reaches after the system reads $a_1\ldots a_{k}$.
    Accordingly, we set $p_{j,0}=s_0$ for every $1\leq j\leq i$.
    The {state} of the processes after reading the next letter, $a_{l+1}$, depends on their {state} after reading $a_{l}$.
    
    Let ${w{\in} \mathcal{P}_{i}}$, we add the constraint 
    $\psi_{w,i}$ defined as follows.
\[\psi_{w,i}=
\begin{array}
{l}\displaystyle\bigwedge_{1\leq \ell \leq m} \left( \bigvee_{1\leq j \leq i} \left(\left( p_{j,\ell-1} = \fst(a_{\ell})\right) \wedge \varphi_{j,\ell} \right)\right) 
\end{array}\]
where
\[\displaystyle\varphi_{j,\ell}=
\left( 
\begin{array}{c}
p_{j, \ell} = \fsend(a_{\ell}) \quad \wedge \\ 
    \displaystyle\bigwedge_{\scriptsize{\begin{array}{c}
    1\leq j' \leq i\\
    j' \neq j
    \end{array}}} p_{j',\ell} = \frec_{a_{\ell}}(p_{j',\ell-1})   
\end{array}
\right) 
\]
Intuitively, $\psi_{w,i}$ requires that for every letter $a_{\ell}$ of $w$ one of the processes, call it $j$, reached a state in which $a_{\ell}$ is enabled. The formula {$\varphi_{j,\ell}$} states that the $j$-th process took the sending transition on {$a_{\ell}$} and the rest of the processes took the respective receiving transition.  

    Let $w\in \mathcal{N}_{i}$. We then add the following requirement
    $$\begin{array}{l}    
    \displaystyle \bigvee_{0\leq \ell < m} \left( \psi_{w[..\ell],i} \wedge \bigwedge_{1\leq j \leq i} \left(p_{j,\ell} \neq \fst(a_{\ell+1})\right) \right)
    \end{array}$$
    where we let $\psi_{\epsilon,i}=\true$ for every $i$.
    
    Intuitively, if $w$ is infeasible with $i$ processes, then there exists a (possibly empty) prefix $w[..\ell]$ which is feasible with $i$ processes, therefore $\psi_{w[..\ell],i}$ holds, while $w[..\ell\!+\!1]$ is infeasible, meaning none of the $i$ processes is in a state where $a_{\ell+1}$ is enabled.

\end{enumerate}

\begin{theorem}\label{thm:inf-correct}
Let $\mathcal{S}$ be a sample that is consistent with some fine BP.
Let $B_\mathcal{S}$ be a BP that satisfies the prescribed constraints $\Psi_\mathcal{S}$. Then $B_\mathcal{S}$ is a BP consistent with $\mathcal{S}$.
\end{theorem}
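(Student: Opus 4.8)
The plan is to verify the two halves of consistency separately: that every positive triple $(w,i,\true)\in\mathcal{S}$ has $w$ feasible in $B_\mathcal{S}^i$, and that every negative triple $(w,i,\false)\in\mathcal{S}$ has $w$ infeasible in $B_\mathcal{S}^i$. The single observation that drives both halves is that $B_\mathcal{S}$ is \emph{deterministic at the configuration level}: since each action $a$ has a unique sending transition (fixing $\textbf{v}_a,\textbf{v}'_a$) and each state has exactly one response per action (so every column of $\textbf{M}_a$ is a unit vector), the update $\textbf{q}'=\textbf{M}_a\cdot(\textbf{q}-\textbf{v}_a)+\textbf{v}'_a$ is a total function of $\textbf{q}$ whenever $a$ is enabled. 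Hence a word has at most one configuration sequence in $B_\mathcal{S}^i$, and $w$ is feasible iff along this unique sequence every action is enabled when it is read.

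For the positive direction, fix the minimal $i_0$ with $w\in\mathcal{P}_{i_0}$; by construction $\Psi_\mathcal{S}$ contains $\psi_{w,i_0}$, and since $B_\mathcal{S}$ satisfies $\Psi_\mathcal{S}$ there is an assignment to the variables $p_{j,\ell}$ making $\psi_{w,i_0}$ true. Reading off the configurations $\textbf{q}_\ell$ defined by $\textbf{q}_\ell[s]=|\{\,j:p_{j,\ell}=s\,\}|$, I would check that this is a legal execution: $\textbf{q}_0$ is the initial configuration because $p_{j,0}=s_0$ for all $j$; and at step $\ell$ the disjunct chosen by $\psi_{w,i_0}$ supplies a process $j$ with $p_{j,\ell-1}=\fst(a_\ell)$ (so $a_\ell$ is enabled), moving it to $\fsend(a_\ell)$ while every other process $j'$ moves to $\frec_{a_\ell}(p_{j',\ell-1})$. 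Expanding the definitions of $\textbf{v}_{a_\ell}$, $\textbf{v}'_{a_\ell}$ and $\textbf{M}_{a_\ell}$ shows this is exactly $\textbf{q}_\ell=\textbf{M}_{a_\ell}\cdot(\textbf{q}_{\ell-1}-\textbf{v}_{a_\ell})+\textbf{v}'_{a_\ell}$, i.e. $(\textbf{q}_{\ell-1},a_\ell,\textbf{q}_\ell)\in T$. Thus $w$ is feasible in $B_\mathcal{S}^{i_0}$, and by monotonicity (Lem.\ref{clm:monotonicity}) feasible in $B_\mathcal{S}^i$ for every $i\geq i_0$, which covers all positive triples for $w$ since $i_0$ is minimal. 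The base case $i_0=1$ and the length-one constraints are the obvious specializations.

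For the negative direction, fix the maximal $i_1$ with $w\in\mathcal{N}_{i_1}$; the negative constraint guarantees an index $\ell$ with $\psi_{w[..\ell],i_1}$ true and $p_{j,\ell}\neq\fst(a_{\ell+1})$ for all $j$. By the positive analysis, $\psi_{w[..\ell],i_1}$ true means $w[..\ell]$ is feasible in $B_\mathcal{S}^{i_1}$ and the $p_{j,\ell}$ describe a reachable configuration $\textbf{q}_\ell$ with $\textbf{q}_\ell[\fst(a_{\ell+1})]=0$, so $a_{\ell+1}$ is disabled there. Here is where determinism is essential: because $\textbf{q}_\ell$ is the \emph{only} configuration reachable after $w[..\ell]$, the action $a_{\ell+1}$ is disabled after every execution of $w[..\ell]$, so $w[..\ell{+}1]$ is infeasible in $B_\mathcal{S}^{i_1}$; prefix-closedness of $L(B_\mathcal{S}^{i_1})$ (obtained by truncating executions) then gives $w$ infeasible in $B_\mathcal{S}^{i_1}$, and monotonicity propagates infeasibility down to every $i\leq i_1$, covering all negative triples for $w$.

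I expect the negative direction to be the main obstacle, precisely because it is where one must rule out \emph{all} executions rather than exhibit one: the constraint only witnesses a single stuck prefix-run, and turning this into genuine infeasibility hinges on the determinism observation above. Note also that the hypothesis that $\mathcal{S}$ is consistent with some fine BP is not needed for this soundness argument; it serves only to guarantee that $\Psi_\mathcal{S}$ is satisfiable so that a $B_\mathcal{S}$ exists in the first place.
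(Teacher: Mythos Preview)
Your proof is correct and follows essentially the same approach as the paper's: verify the positive triples by reading an execution off the satisfied $\psi_{w,i_0}$ for the minimal $i_0$, verify the negative triples via the stuck index $\ell$ supplied by the negative constraint for the maximal $i_1$, and use monotonicity (Lem.~\ref{clm:monotonicity}) to propagate in the appropriate direction. The paper organizes this as a double induction (on length, then on $i$), but the content is the same.

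The one point where you are more explicit than the paper is the determinism observation for the negative direction: the paper simply asserts that the constraint ``stipulates'' infeasibility, whereas you correctly isolate why a \emph{single} witnessed configuration with $a_{\ell+1}$ disabled suffices---namely that the configuration reached after $w[..\ell]$ in $B_\mathcal{S}^{i_1}$ is unique. This is a genuine clarification, and your closing remark that the fine-BP hypothesis is used only for satisfiability of $\Psi_\mathcal{S}$ (not for soundness) is also accurate.
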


\begin{proof}
We prove that if $w{\in}\mathcal{P}_i$ (resp. $w{\in}\mathcal{N}_i$) then $w$ is feasible (resp. infeasible) in $B_\mathcal{S}^i$, by induction first on the length of $w$ and then on $i$.
For $w$ of length $1$, this holds by the constraints in item~(\ref{cnstr:initial}).
Let~$w\!=\!a_1a_2\!\ldots\! a_n{\in}\mathcal{P}_i$. 
If $i{=}1$ then this holds by induction on $w$ thanks to constraint (\ref{constr:longword-eq1}).
Next we consider words of the form $w$ that are in $\mathcal{P}_{i}\cup \mathcal{N}_{i}$. 
If $w \in \mathcal{P}_i$ is already in $\mathcal{P}_{j}$ for $j < i$ then by the induction hypothesis it is already feasible for $j$ processes in the constructed BP, and by Lem.\ref{clm:monotonicity}, it is also feasible with $i$ processes. 
Otherwise, $w \in \mathcal{P}_{i}\setminus \bigcup_{j{<}i}\, \mathcal{P}_{j}$. In this case, constraint (\ref{cnstr:huge}) makes sure that every prefix of $w$ is feasible with $i$ processes, and requiring that for the next letter $a_\ell$
one of the $i$ processes reached the state enabling $a_\ell$ after reading the prefix up to $a_{\ell\!-\!1}$.

If  $w{\in}\mathcal{N}_{i}$ then 
$w$ is infeasible with $i$ processes.
 In this case, there exists a letter $a_\ell$ for $1\!\leq\! \ell \!\leq\! m$ such that while $w[..\ell{-}1]$ is feasible, $a_\ell$ is not enabled in any of the states that the $i$ processes reach after reading (the possibly empty) prefix $w[..\ell{-}1]$. This is exactly what constraint (\ref{cnstr:huge}) stipulates, which is added for $i$ or some $j > i$. 
 In the latter case, Lem.\ref{clm:monotonicity} implies that $w$ is infeasible in $B_\mathcal{S}^i$.
\end{proof}
Finally, note that our constraints are in the theory of equality with uninterpreted functions (EUF), and are therefore decidable.%
\footnote{While constraint (\ref{cnstr:no-hidden}) depends on the unknown set $A$, we can bound the size of $A$, e.g., by the size of the prefix tree of $\mathcal{S}$.}
Thus, an algorithm for inferring fine BPs can be implemented using an SMT solver.
\begin{corollary}
$\algo{I}$ is an inference algorithm for fine BPs.
\end{corollary}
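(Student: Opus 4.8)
The plan is to read $\algo{I}$ as the procedure that, given $\mathcal{S}$, assembles the constraint system $\Psi_\mathcal{S}$ of items~(\ref{cnstr:apart})--(\ref{cnstr:huge}) and passes it to an SMT solver, reading a BP off any returned model. To qualify as an inference algorithm for fine BPs I must verify three things: that $\algo{I}$ terminates, that on every sample consistent with some fine BP it returns \emph{some} BP rather than failing, and that this BP belongs to the target class and agrees with $\mathcal{S}$. Termination follows from the remark preceding the corollary: after bounding $|A|$ (say by the size of the prefix tree of $\mathcal{S}$), $\Psi_\mathcal{S}$ is a finite formula in the decidable theory of equality with uninterpreted functions, so the solver halts reporting either a model or UNSAT.

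Next I would show $\Psi_\mathcal{S}$ is satisfiable whenever $\mathcal{S}$ is consistent with a fine BP, so that the solver never returns UNSAT. The witness is that fine BP $B^\star$ itself: its functions $\fst,\fsend,\{\frec_a\}$ supply a candidate valuation, and I would check each constraint family against it. Items~(\ref{cnstr:initial})--(\ref{cnstr:huge}) encode precisely the feasibility and infeasibility of the words in $\mathcal{P}$ and $\mathcal{N}$ under the algebraic semantics, so they hold because $B^\star$ is consistent with $\mathcal{S}$; Lem.\ref{clm:monotonicity} is what licenses scanning only the minimal $i$ for positive and the maximal $i$ for negative words. The one family needing a dedicated argument is the apartness constraint~(\ref{cnstr:apart}): if $a\apart{\mathcal{S}}b$ is witnessed by $(wa,n,\true),(wb,n',\false)\in\mathcal{S}$ with $n'\ge n$, then $wa$ is feasible and $wb$ infeasible with $n'$ processes (using Lem.\ref{clm:monotonicity}), so $a$ and $b$ cannot both be enabled in a common lit state, whence $\fst(a)\neq\fst(b)$ in $B^\star$ --- exactly the reading of $\apart{\mathcal{S}}$ supported by Lem.\ref{lem:iso}.

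Soundness is then handed to me by Thm.\ref{thm:inf-correct}: any model of $\Psi_\mathcal{S}$ implements a BP $B_\mathcal{S}$ consistent with $\mathcal{S}$, while constraint~(\ref{cnstr:no-hidden}), which takes the state set to be $\{\fst(a):a\in A\}$ and forces every state to enable an action, guarantees $B_\mathcal{S}$ has no hidden states. Combining the three points, $\algo{I}$ terminates and returns a no-hidden-state BP agreeing with $\mathcal{S}$. The step I expect to be the genuine obstacle is arguing that the returned $B_\mathcal{S}$ is \emph{fine}, i.e.\ that it has a cutoff: $\Psi_\mathcal{S}$ constrains behavior only at the finitely many process counts occurring in $\mathcal{S}$, and a no-hidden-state BP need not have a cutoff in general. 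I would resolve this by reading the guarantee relative to the target class --- the inference problem asks only for a BP consistent with $\mathcal{S}$ under the assumption that $\mathcal{S}$ comes from a fine BP --- and would flag that certifying a cutoff for the particular $B_\mathcal{S}$ produced (e.g.\ via the bounded state count together with the structure forced by the constraints, or by constraining the model search) is the point requiring the most care.
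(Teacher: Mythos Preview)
The paper does not give a separate proof of this corollary: it is stated as an immediate consequence of Thm.~\ref{thm:inf-correct} together with the preceding sentence that the constraints lie in the decidable EUF theory (plus the footnote bounding $|A|$). In particular, the paper does not spell out the satisfiability argument, nor does it address why the returned BP is fine.

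Your proposal is therefore strictly more detailed than the paper's own treatment, and the extra ingredients you supply are correct. The satisfiability argument---taking the witnessing fine BP $B^\star$ as a model of $\Psi_\mathcal{S}$ and checking each constraint family, with the apartness case handled via Lem.~\ref{clm:monotonicity}---is a genuine addition that the paper omits but that is needed for $\algo{I}$ never to report UNSAT on admissible input. Your reading of termination matches the paper's one-line appeal to EUF decidability.

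The concern you flag at the end is also well placed: the definition of an inference algorithm for a class $\mathcal{C}$ requires the output to lie in $\mathcal{C}$, yet nothing in $\Psi_\mathcal{S}$ forces the model to have a cutoff, and no-hidden-states alone does not guarantee one. The paper simply does not address this point; it appears to treat the corollary as following already from ``consistent with $\mathcal{S}$'' plus ``no hidden states.'' So your hesitation is not a defect of your argument relative to the paper---you have identified a gap that the paper leaves open as well. If you want to close it honestly, the cleanest route is the one you gesture at: have $\algo{I}$ search over candidate BPs (e.g., by increasing state count) and, for each model returned, test for a cutoff before accepting; or else weaken the claim to ``returns a BP with no hidden states that is consistent with $\mathcal{S}$,'' which is all that Thm.~\ref{thm:inf-correct} actually delivers.
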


\section{Returning a Minimal BP}
\label{sec:teachbility}
In this section we show that when the sample is sufficiently complete 
we can guarantee that we return a minimal equivalent BP, and not just a BP that agrees with the sample. We thus first show that every fine BP $B$ can be associated with a sample $\mathcal{S}_B$ so that there exists an inference algorithm $\algo{A}$ that when applied to any sample $\mathcal{S}$ that subsumes $\mathcal{S}_B$ and is consistent with $B$, returns a minimal fine BP that is equivalent to $B$. We refer to such a sample as a \emph{characteristic set} (CS). 

In the following, we first describe a procedure $\algo{G}$ that
generates a sample $\mathcal{S}_B$ from a fine BP $B$, and then we prove that an inference algorithm $\algo{A}$ can correctly infer a minimal BP $B'$ equivalent to $B$ from any sample subsuming $\mathcal{S}_B$.

\subsection{Generating a Characteristic Set}
\label{sec:CSgen}

The CS generation algorithm $\algo{G}$ 
builds a sequence of trees $\tree{T}_i$ starting with $i=0$ and incrementing $i$ by one until  $\tree{T}_{i+1}=\tree{T}_i$. The  edges of the tree are actions. The name of a node is taken to be the unique sequence of actions $w$ that leads to it. Thus, the root is named $\varepsilon$ and a child of a node $w\in A^*$ is named $wa$ for some $a\in A$.
A node $w\in A^*$ in tree $\tree{T}_i$ is annotated 
with $B^i(w)=\textbf{p}_{w,i}$, i.e. {the state-vector $B^i$ reaches when reading $w$}, if $w$ is feasible in $B^i$, and with the special symbol $\bot$ otherwise.  We call a node in the tree \emph{positive} if it is annotated with a state-vector, and \emph{negative} otherwise.
All nodes are either leaves or have exactly $|A|$ children. Negative nodes are always leaves.

The tree $\tree{T}_0$ consists of only a root $\varepsilon$ and is annotated with the state-vector of all zeros. 
The tree $\tree{T}_{i+1}$ is constructed from the tree $\tree{T}_i$ by first re-annotating all its nodes: The annotation of a positive $\textbf{p}_{w,i}$ is replaced by $\textbf{p}_{w,i+1}$, a negative node $w$ in $\tree{T}_i$ may become positive in $\tree{T}_{i+1}$ (if
$w$ is feasible with $i+1$ processes) and will be annotated accordingly with $\textbf{p}_{w,i+1}$.
Then we check, from every positive node, whether further exploration is needed.
A positive node will be declared a leaf if it is of the form  $va$ and it has an ancestor $u$, a prefix of $v$, for which $\textbf{p}_{u,i+1}=\textbf{p}_{v,i+1}$. Otherwise its $|A|$ children are constructed. 
That is, once we reach a node whose state-vector is the same as one of its ancestors, we develop its children, but the children are not developed further. 

The entire process terminates when $\tree{T}_{i+1}=\tree{T}_i$.\footnote{We say $\tree{T}_{i+1}=\tree{T}_i$ if they agree on the tree structure and the edge labels (regardless of the nodes' annotations).} 
Note that given that the BP has a cutoff, such an $i$ must exist. We use $\mathcal{T}$ for the last tree constructed, namely $\mathcal{T}_{i+1}$.
The sample is then produced as follows. For $n\in[i+1]$, let
$\mathcal{P}_n = \{(u,n,\true)~|~ n$ is the minimal for which $u$ is positive in $\tree{T}_n \}$, 
$\mathcal{N}_{n} = \{(u,n,\false)~|$ 
$n$ is the maximal for which $u$ is  negative  in $\tree{T}_n \}$.  Then the sample 
is 
$\mathcal{S}_B = \bigcup_{n=1}^{i+1} \left( \mathcal{P}_n \cup  \mathcal{N}_n\right)$. 

\subsection{Proving that $\algo{G}$ generates characteristic sets}
\label{sec:CSproof}

We first note that for any reachable state $s$ of the original BP, there exists at least one node $v$ in the tree where $s$ is lit (i.e. the entry for  $s$ in the state-vector annotating the node is lit). The following lemma strengthens this statement further.

\begin{restatable}{lemma}{reach}\label{lem:reachability-exhaust}
Let $\textbf{p}$ be a state-vector that is reachable in $B^n$.
Then for every shortest word $w$ that reaches $\textbf{p}$ in $B^n$
there exists a node $w$ in $\mathcal{T}_n$ such that $\textbf{p}_w=\textbf{p}$.
\end{restatable}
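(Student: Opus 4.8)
The plan is to fix a shortest word $w=a_1a_2\cdots a_m$ that reaches $\textbf{p}$ in $B^n$ and to show that the entire path $\varepsilon, a_1, a_1a_2,\ldots,w$ survives in $\tree{T}_n$; then $w$ is a node of $\tree{T}_n$ and, being feasible in $B^n$, it is positive and annotated by $B^n(w)=\textbf{p}$, as required. By Lem.~\ref{clm:monotonicity} every prefix of $w$ is feasible in $B^n$, so each prefix node, once present, is positive and hence not a negative leaf. Thus the only mechanism that could truncate the path before depth $m$ is the loop-closing rule, which declares a node $va$ a leaf exactly when its parent $v$ has a strict ancestor $u$ (a proper prefix of $v$) with $B^n(u)=B^n(v)$.

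The key observation I would establish first is that the configurations $B^n(a_1\cdots a_k)$ for $k=0,1,\ldots,m$ are pairwise distinct. Indeed, suppose $B^n(a_1\cdots a_j)=B^n(a_1\cdots a_k)$ for some $j<k$. Since each action has a unique sending state and a unique broadcast matrix, $B^n$ is deterministic: from a fixed configuration a suffix is either infeasible or leads to a unique configuration. As $w=(a_1\cdots a_k)\,a_{k+1}\cdots a_m$ reaches $\textbf{p}$, the suffix $a_{k+1}\cdots a_m$ is feasible from $B^n(a_1\cdots a_k)=B^n(a_1\cdots a_j)$ and also reaches $\textbf{p}$; hence $a_1\cdots a_j\,a_{k+1}\cdots a_m$ reaches $\textbf{p}$ in $B^n$ and has length $j+(m-k)<m$, contradicting the minimality of $w$.

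Given this, no prefix $a_1\cdots a_k$ with $k\le m-1$ can be loop-closing, since that would require an earlier prefix on the same path carrying an identical configuration. Consequently each such prefix node is positive and not loop-closing, so the construction of $\tree{T}_n$ expands it into its $|A|$ children. A straightforward induction on depth then shows that every node $a_1\cdots a_\ell$ with $0\le\ell\le m$ is present in $\tree{T}_n$ (the root is always expanded), and in particular $w$ is a node of $\tree{T}_n$ annotated with $\textbf{p}$.

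I expect the main obstacle to be purely a matter of stating the tree's expansion rule precisely enough to make the reduction airtight---in particular fixing that ``$u$ a prefix of $v$'' means a \emph{strict} ancestor, and explicitly invoking the determinism of $B^n$ when transporting feasibility of the suffix from the configuration $B^n(a_1\cdots a_k)$ to the equal configuration $B^n(a_1\cdots a_j)$. Once the distinctness-of-configurations claim is in hand, the remaining induction on depth is routine bookkeeping.
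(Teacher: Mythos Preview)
Your proof is correct and takes a different, more direct route than the paper's. The paper argues by a double induction, first on $n$ and then on the length of the remaining suffix: it invokes Lem.~\ref{lem:progress-ce} to locate a prefix $u$ of $w$ that is feasible in $B^{n-1}$, places $u$ in $\tree{T}_{n-1}$ via the outer induction hypothesis, and then asserts that the tree construction at stage $n$ extends the path from $u$ all the way to $w$. You sidestep both the outer induction on $n$ and the appeal to Lem.~\ref{lem:progress-ce} by working directly in $\tree{T}_n$: the single observation that a shortest word must have pairwise-distinct prefix configurations (which you derive cleanly from determinism of $B^n$) already forces every proper prefix to be non-loop-closing, so the expansion rule alone carries the full path. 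This is more elementary and self-contained, and in fact articulates the combinatorial reason the lemma holds more transparently than the paper's sketch. The only point worth making explicit is that your depth induction tacitly uses that $\tree{T}_n$ is \emph{saturated}, i.e., every positive non-loop-closing node present in $\tree{T}_n$ has its $|A|$ children there, regardless of which earlier $\tree{T}_j$ first introduced it. That is the intended reading of the construction (``once we reach a node whose state-vector is the same as one of its ancestors, we develop its children, but the children are not developed further''), so the reliance is harmless once stated. The paper's approach, by contrast, tracks the incremental build $\tree{T}_{n-1}\to\tree{T}_n$ more closely, which mirrors the algorithmic description but obscures the simple shortest-path reason behind the lemma.
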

When the sample $\mathcal{S}$ subsumes 
the set $\mathcal{S}_B$ then $\apart{\mathcal{S}}$ 
induces an  equivalence relation between the actions: 

\begin{restatable}{lemma}{simeq}
\label{lem:sim-eq}
For two actions $a$ and $b$ define $a\sim_{\mathcal{S}} b$ iff it is not the case that $a\apart{\mathcal{S}} b$.
If $\mathcal{S}$ subsumes $\mathcal{S}_B$
then $\sim_{\mathcal{S}}$ is an equivalence relation.
\end{restatable}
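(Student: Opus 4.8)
The plan is to verify the three properties of an equivalence relation separately, with reflexivity and symmetry immediate and transitivity doing all the work. Symmetry is built into the definition: $\apart{\mathcal{S}}$ is stated symmetrically (``or vice versa''), so its complement $\sim_{\mathcal{S}}$ is symmetric too. For reflexivity I would show $a\apart{\mathcal{S}} a$ is impossible: such a witness would require $(wa,n,\true)$ and $(wa,n',\false)$ in $\mathcal{S}$ with $n'\ge n$, i.e.\ $wa\in L(B^n)$ and $wa\notin L(B^{n'})$ by consistency of $\mathcal{S}$ with $B$, contradicting monotonicity (Lem.\ref{clm:monotonicity}). Hence $a\sim_{\mathcal{S}} a$ for all $a$.

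For transitivity I would pass through a purely semantic relation on actions: say $a\approx b$ iff for every $n$ and every $w\in L(B^n)$ one has $wa\in L(B^n)\iff wb\in L(B^n)$ (i.e.\ $a$ and $b$ are co-enabled in every configuration reachable in every $B^n$). This $\approx$ is visibly an equivalence relation, so it suffices to prove that $\sim_{\mathcal{S}}$ and $\approx$ coincide whenever $\mathcal{S}\supseteq\mathcal{S}_B$ (and is consistent with $B$). The easy inclusion is $a\approx b\Rightarrow a\sim_{\mathcal{S}} b$: if a witness $(wa,n,\true),(wb,n',\false)$ with $n'\ge n$ existed, then $wa\in L(B^n)$ and prefix-closedness give $w\in L(B^n)$, so $a\approx b$ forces $wb\in L(B^n)$, and monotonicity forces $wb\in L(B^{n'})$, contradicting $(wb,n',\false)$.

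The substantive inclusion is the contrapositive of the converse: from $\lnot(a\approx b)$ I must produce a genuine witness already inside $\mathcal{S}_B$. Unwinding $\lnot(a\approx b)$ gives some $n$ and a configuration $\textbf{p}$ reachable in $B^n$ where, w.l.o.g., $\fst(a)$ is lit and $\fst(b)$ is not; using the cutoff I may assume $n$ lies in the range explored by $\algo{G}$. Let $u$ be a shortest word reaching $\textbf{p}$; by Lem.\ref{lem:reachability-exhaust} there is a node $u$ in $\mathcal{T}_n$ with $\textbf{p}_u=\textbf{p}$. The crucial point is that this node is not a leaf: were $u=va$ a leaf, its parent $v$ would repeat a proper ancestor $u'$ with $\textbf{p}_{u'}=\textbf{p}_v$, and determinism of the global transition relation (from the unique-sender assumption) would make $u'a$ a strictly shorter word reaching $\textbf{p}$, contradicting minimality of $u$. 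Thus both children $ua$ and $ub$ are present; $ua$ is positive (as $\fst(a)$ is lit, $a$ is enabled at $\textbf{p}$) and $ub$ is negative. The minimal level at which $ua$ becomes positive yields $(ua,n_1,\true)\in\mathcal{S}_B$ with $n_1\le n$, and the maximal level at which $ub$ stays negative yields $(ub,n_2,\false)\in\mathcal{S}_B$ with $n_2\ge n$; since $n_2\ge n\ge n_1$, the word $u$ and these two triples witness $a\apart{\mathcal{S}} b$, as required.

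I expect this last inclusion to be the main obstacle, and within it the non-leaf argument together with lining up the recorded thresholds so that $n_2\ge n_1$ (as the definition of $\apart{\mathcal{S}}$ demands). The two earlier results do the heavy lifting here: reachability-exhaustiveness (Lem.\ref{lem:reachability-exhaust}) guarantees the distinguishing configuration appears as a tree node, and monotonicity (Lem.\ref{clm:monotonicity}) guarantees the positive and negative thresholds are ordered correctly, while the cutoff confines the relevant $n$ to the finitely many trees $\algo{G}$ constructs.
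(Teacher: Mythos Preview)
Your proof is correct and follows essentially the same route as the paper: identify $\sim_{\mathcal{S}}$ with a semantic equivalence on actions and use Lem.~\ref{lem:reachability-exhaust} to extract a separating witness from the tree for the hard direction. Your version is considerably more explicit than the paper's (which dispatches reflexivity and symmetry in one word and leaves the non-leaf and threshold-alignment arguments implicit), and your choice of the co-enabledness relation $\approx$ rather than the paper's ``enabled from the same state'' is arguably cleaner, since it does not implicitly rely on $B$ being minimal.
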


\begin{theorem}\label{min_bp_sb}
Let $B$ be a fine minimal BP, and let $\mathcal{S}_B$ be the sample generated for it as above. There is an inference algorithm $\algo{A}$ such that if $B'$ is the result of $\algo{A}$  when applied to any set subsuming $\mathcal{S}_B$ and consistent with $B$ then $B'$ is minimal and $L(B')=L(B)$.
\end{theorem}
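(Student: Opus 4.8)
The plan is to let $\algo{A}$ be the inference algorithm $\algo{I}$ of \S\ref{sec:alg-inference} wrapped in a search for the \emph{minimum} number of states: $\algo{A}$ looks for the smallest $k$ for which $\Psi_\mathcal{S}$ is satisfiable over a state space of size $k$ and returns the induced BP $B'$. By Thm~\ref{thm:inf-correct}, $B'$ is consistent with $\mathcal{S}$, and since $\mathcal{S}$ is consistent with $B$ and $B$ itself (with its $|S|$ states) satisfies $\Psi_\mathcal{S}$, the minimum is at most $|S|$, so $|S'|\le|S|$. It then suffices to establish two facts: that $B'$ has exactly $|S|$ states, and that $L(B')=L(B)$. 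Minimality of $B'$ is immediate afterwards, since $B$ is minimal for $L(B)=L(B')$ and hence no BP equivalent to $B'$ can have fewer than $|S|$ states.

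For the state count I would show that $\sim_{\mathcal{S}}$ has exactly $|S|$ classes. Because $B$ has a unique sending state per action and no hidden states, $\fact$ partitions $A$ into $|S|$ nonempty disjoint blocks, one per state. If $a,b$ lie in the same block (so $s_a=s_b$), then for every word $w$ and every $n$, $wa$ and $wb$ are feasible in $B^n$ exactly when $s_a$ is lit after $w$; hence by monotonicity (Lem~\ref{clm:monotonicity}) no sample consistent with $B$ can witness $a \apart{\mathcal{S}} b$, so $a\sim_{\mathcal{S}} b$. Conversely, if $a,b$ lie in different blocks I would use that in a \emph{minimal} $B$ two distinct states cannot be lit in exactly the same reachable configurations (otherwise their responses would be forced to agree and the states could be merged); thus some reachable state-vector lights exactly one of $s_a,s_b$, and by Lem~\ref{lem:reachability-exhaust} the witnessing configuration occurs as a node of some $\mathcal{T}_n$, so $\mathcal{S}_B$---and hence $\mathcal{S}$---records a pair witnessing $a \apart{\mathcal{S}} b$. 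Therefore $\sim_{\mathcal{S}}$ (an equivalence relation by Lem~\ref{lem:sim-eq}) coincides with the $\fact$-partition of $B$ and has $|S|$ classes. Constraint~(\ref{cnstr:apart}) then forces $\fst$ to separate the blocks, giving $|S'|\ge|S|$, hence $|S'|=|S|$; moreover $B'$ must place exactly the co-enabled actions together, so its $\fact$-partition agrees with that of $B$, yielding a bijection $h:S\to S'$ with $\fact(s)=\fact(h(s))$.

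The heart of the argument is the language equality, which I expect to be the main obstacle. The idea is to show that on the reachable part, $B'$ reproduces the transitions of $B$ under $h$: $\fsend_{B'}(a)=h(\fsend_B(a))$ for every sending transition, and $\frec_{B',a}(h(s))=h(\frec_{B,a}(s))$ for every response that is actually exercised by some reachable configuration. By Lem~\ref{lem:reachability-exhaust}, every reachable state-vector of $B^n$ (for $n$ up to the cutoff) appears as a node of $\mathcal{T}_n$, so every sending transition and every relevant response is witnessed by a sampled word; consistency of $B'$ with these positive and negative words, proved by a simultaneous induction on the number of processes and on word length in the style of Lem~\ref{lem:iso}, then forces the transition images under $h$, while responses from configurations that are never reached are irrelevant to the language and may be set arbitrarily. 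Using this matching together with monotonicity (Lem~\ref{clm:monotonicity}) and step-by-step progress (Lem~\ref{lem:progress-ce}) to lift agreement from the finite trees to all words and to process counts beyond the cutoff, I would conclude by induction that $B'^m(w)$ and $B^m(w)$ light corresponding states under $h$ for every $w$ and every $m$, whence $L(B'^m)=L(B^m)$ for all $m$; in particular $B'$ is fine and $L(B')=L(B)$.

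The genuinely delicate point is pinning down the receiving transitions $\frec_a$ of $B'$, since a response never appears in the sample directly---it shows up only through the feasibility or infeasibility of later actions taken by a process that previously received $a$. Making the induction go through requires that the characteristic-set construction explores enough multi-process configurations that every exercised response is uniquely determined, and that the ancestor-repetition stopping rule used to halt the growth of the trees $\mathcal{T}_i$ certifies agreement of $B'$ and $B$ for all larger process counts. This is precisely where Lem~\ref{lem:reachability-exhaust} and Lem~\ref{lem:progress-ce} do the real work. Once $L(B')=L(B)$ is established, minimality of $B'$ follows from $|S'|=|S|$ and the minimality of $B$.
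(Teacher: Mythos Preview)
Your algorithm differs from the paper's: you wrap $\algo{I}$ in a minimum-state search, whereas the paper defines $\algo{I'}$ by strengthening constraint~(\ref{cnstr:apart}) to an \emph{iff} (forcing $\fst(a)=\fst(b)$ whenever $a\sim_\mathcal{S} b$), with a fallback to $\algo{I}$ if $\algo{I'}$ is unsatisfiable. These are essentially equivalent once one knows $\sim_\mathcal{S}$ coincides with the $\fact$-partition of $B$: both force the returned BP to have exactly the $\sim_\mathcal{S}$-classes as states. Your route has the minor advantage that it avoids the two-phase fallback; the paper's route makes the connection to $\sim_\mathcal{S}$ explicit in the constraint system.

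There is, however, a genuine gap in your language-equality argument. You propose to show that $B'$ \emph{reproduces the transitions} of $B$ under $h$, i.e., $\fsend_{B'}(a)=h(\fsend_B(a))$ and $\frec_{B',a}(h(s))=h(\frec_{B,a}(s))$ for every exercised response, and then derive lit-state agreement from this. That intermediate claim is false, and the paper's own Example~\ref{ex:two-non-iso-bps} witnesses it: $B_1$ and $B_2$ are minimal, equivalent, and have identical $\fact$-partitions (so $h$ is the identity), yet $\fsend_{B_1}(b)=s_0\neq s_1=\fsend_{B_2}(b)$ and $\frec_{B_1,b}(s_0)=s_1\neq s_0=\frec_{B_2,b}(s_0)$; both of these transitions are exercised in the reachable configuration $\bivector{1}{1}$. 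Since $L(B_1)=L(B_2)$, the characteristic set $\mathcal{S}_{B_1}$ is consistent with $B_2$, so your $\algo{A}$ could legitimately return $B_2$ when learning $B_1$---hence consistency with $\mathcal{S}_B$ cannot force transition-level matching.

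What the sample \emph{does} pin down is the set of \emph{lit states} after each prefix: the positive and negative triples at a node $w$ tell you exactly which actions are enabled after $w$ in $B^n$, hence (given the $\fact$-bijection $h$) exactly which states are lit. The paper therefore bypasses transitions entirely and argues directly that the constraints, together with Lem.~\ref{lem:reachability-exhaust}, force $B'$ to satisfy the \emph{conclusion} of Lem.~\ref{lem:iso} (lit-state agreement under $h$ for every $m$ and every $w$), from which $L(B)=L(B')$ is immediate. Your induction should be reorganized along the same lines: the inductive invariant is ``$\textbf{p}_w[i]$ is lit iff $\textbf{q}_w[h(i)]$ is lit'', not ``the transitions match''.
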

\begin{proof}
The inference algorithm $\algo{A}$ we use to prove this claim  runs in two steps.
First it runs a variation $\algo{I'}$ of the inference algorithm $\algo{I}$ presented in \S\ref{sec:alg-inference} 
that turns the constraint (\ref{cnstr:apart}) into an iff constraint. I.e. adding that $\fst(a)=\fst(b)$ unless $a\apart{\mathcal{S}} b$.
If running $\algo{I'}$ returns that there is no satisfying assignment then it runs $\algo{I}$. In both cases Thm.~\ref{thm:inf-correct} guarantees that the returned BP is consistent with the given sample. Therefore $\algo{A}$ is an inference algorithm.

Next we claim that if the given sample subsumes $\mathcal{S}_B$ then $B'$, the resulting BP, is minimal.
This holds since Lem.\ref{lem:sim-eq} ensures that 
$\apart{\mathcal{S}}$ defines the desired equivalence $\sim_\mathcal{S}$ between actions, and the revised 
constraint (\ref{cnstr:apart}) guarantees that actions are 
not enabled from the same state if and only if the sample separates them. (Note that any word consistent with the BP  cannot separate actions $a$ and $b$ if they are enabled from the same state.) Hence $\algo{I'}$ will not return that there is no  satisfying assignment.

Next we note that by Lem.\ref{lem:reachability-exhaust}
for every state-vector $\textbf{p}$ that is reachable in $B^m$ and for every shortest word $w$ that reaches $\textbf{p}$ in $B^m$
there exists a node $w$ in $\mathcal{T}_m$ such that $\textbf{p}_w=\textbf{p}$.
 If $w=a_1a_2\ldots a_n$ then for each $1\leq i\leq n$ one process took the sending transition $a_i!!$ and the rest of the processes responded with $a_i??$.  
Constraint (\ref{cnstr:huge}) makes sure the {assignment to}
 $\fst$, $\fsend$ and $\frec$ respect all the possible options that enabled this, making sure that for every two options for enabling $w$ that result in state-vectors $\textbf{p}_1$ and $\textbf{p}_2$, resp., the same states are lit in both $\textbf{p}_1$ and $\textbf{p}_2$. 
 
Hence, for any BP $B'$ that adheres to the constraints 
 there exists a mapping $h$ between the states of $B$ and $B'$
satisfying the requirements of Lem.\ref{lem:iso}. Thus, $L(B)\!=\!L(B')$. 
\end{proof}

\commentout{
\begin{figure}
\centering
\scalebox{0.6}{
\begin{tikzpicture}[->,>=stealth',shorten >=1pt,auto,node distance=1.75cm,semithick,initial text=]
    \node[state,initial]          (i1)   {$\scstate{i}_1$};
    \node[state]    (i2) [below of=i1]  {$\scstate{i}_2$};	
    \node[state]   (out1)  [below of = i2]   {$\scstatep{1}$};
    \node[state]   (help)  [left of=out1 , node distance = 3.85cm]   {$\scstate{h}_1$};
    \node[label]   (hh)  [below  of=help]   {$\circ\circ\circ$};
    \node[state]   (helper)  [below of=hh]   {$\scstate{h}_{\ell}$};
    \node[state]   (sink)  [below of=helper]   {$\bot$};
		
    \node[state]   (out2)  [below  of = out1]   {$\scstatep{2}$};	
    \node[label]   (outt)  [below  of=out2]   {$\circ\circ\circ$};
    \node[state]   (outer)  [below  of=outt]   {$\scstatep{n-1}$};
    \node[state]   (outer1)  [below  of=outer]   {$\scstatep{n}$};		
    \node[state]   (in1)  [right of=i2, node distance = 3.85cm]   {$\scstate{1}$};		
    \node[state]   (in2)  [below  of=in1]   {$\scstate{2}$};	
    \node[label]   (inn)  [below  of=in2]   {$\circ\circ\circ$};
    \node[state]   (inner)  [below  of=inn]   {$\scstate{m}$};			
    \node[state]    (top) [below of=inner]  {$\top$};
    \path (i1) edge [bend left =25] node {$i_1!!$} (in1);
    \path (i1) edge [] node {$i_1??$} (i2);	
    \path (i2) edge [bend right = 20] node {$i_2!!$} (help);
    \path (i2) edge [] node {$i_2??$} (out1);
    \path (help) edge [left] node {$h_1!!$} (hh);
    \path (hh) edge [left] node {$h_{\ell -1}!!$} (helper);
    \path (helper) edge [left] node {$h_{\ell }!!$} (sink);
    \path (outer1) edge [bend right =35, left] node {$H??$} (out1);
    \path (inner) edge [bend right =35, right] node {$A??, H??$} (in1);
    \path (in1) edge [left] node {$A??, H??$} (in2);
    \path (in2) edge [left] node {$A??, H??$} (inn);
    \path (inn) edge [left] node {$A??, H??$} (inner);
    \path (out1) edge [left] node {$a_1??$} (out2);
    \path (out1) edge [left] node {$a_1!!$} (sink);
    \path (out2) edge [left] node {$a_2??$} (outt);
    \path (out2) edge [above] node {$a_2!!$} (sink);
    \path (outt) edge [left] node {$a_{n'-2}??$} (outer);
    \path (outer) edge [above] node {$a_{n'-1}!!$} (sink);
    \path (outer) edge [left] node {$a_{n'-1}??$} (outer1);
    \path (outer1) edge [right] node {$c!!$} (sink);
    \path (inner) edge [left] node {$c??$} (top);
    \path (in1) edge [loop left, left] node  {$b_1!!$} (in1);
    \path (in2) edge [loop left, left] node  {$b_2!!$} (in2);
    \path (inner) edge [loop left, left] node  {$b_m!!$} (inner);
    \path (top) edge [loop left, left] node  {$a_{\top}!!$} (top);
\end{tikzpicture}}
\caption{
The BP $B_{m,n,\ell}$ from a family
of fine BP's with quadratic cutoffs. }\label{fig:bp-complex-family}
\end{figure}}

Regarding the problem of \emph{polynomial data} we show that there exist fine BPs for which there is no CS of polynomial size. 
The proof constructs a family of BPs of size quadratic in $n$ for which there exists an action $a_\top$ such that the length of the shortest word containing $a_\top$ is exponential in $n$. 
Thus, any CS has to include at least one such long word.

\begin{restatable}{theorem}{csexp}\label{thm:cs-exp1}
There exists a family of fine BPs with no characteristic set of polynomial size.    
\end{restatable}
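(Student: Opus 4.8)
The plan is to exhibit a family of fine BPs $\{B_n\}_{n\geq 1}$, each with $O(n^2)$ states and a quadratic cutoff, that collectively encode an $n$-bit counter, together with a distinguished action $a_\top$ whose sending transition is enabled only from a single ``saturated'' state $\top$ that the system can reach only after the counter has been driven through all $2^n$ of its values. The quantitative heart of the argument is the claim that the shortest feasible word $w$ for which $wa_\top$ is feasible has length $\Omega(2^n)$, whereas both the number of states and the cutoff of $B_n$ are polynomial in $n$. Given such a family, I will argue that every characteristic set for $B_n$ is forced to contain at least one such exponentially long positive word, so its size (the sum of the lengths of its words) is $\Omega(2^n)$, hence exponential in the size $O(n^2)$ of $B_n$.

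First I would give the construction. The counter value is encoded collectively by the configuration, and incrementing it is realized by a fixed-length block of broadcasts that flips the low-order digit and propagates a carry, using helper states $\scstate{h}_1,\ldots,\scstate{h}_\ell$ to sequence the carry and intermediate states $\scstate{1},\ldots,\scstate{m}$ and $\scstatep{1},\ldots,\scstatep{n}$ to route the receiving processes deterministically. I would arrange the sink $\bot$ so that any out-of-order broadcast (in particular any premature attempt to fire $c$ or $a_\top$) is harmless to the language, keeping $a_\top$ genuinely infeasible until the counter saturates, at which point a process first reaches $\top$ (via the response $c??$) and $a_\top$ becomes enabled there. Two verification obligations arise: (i) that $B_n$ is \emph{fine}, for which I would exhibit a cutoff quadratic in $n$, arguing that once that many processes are present every digit slot and carry-propagation role can be populated, so by monotonicity (Lem.\ref{clm:monotonicity}) adding further processes does not enlarge the language; and (ii) that no word shorter than $\Omega(2^n)$ reaches $\top$, which follows since each increment block advances the encoded counter by exactly one, consumes only a constant number of letters, and $\top$ requires passing through all $2^n$ values.

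With the construction in hand, the lower bound on characteristic sets follows from the minimality guarantee of Thm.\ref{min_bp_sb} together with Lem.\ref{lem:iso}. Since $a_\top$ is enabled from the unique state $\top$, every minimal BP equivalent to $B_n$ has a state with $a_\top \in \fact(\top)$, and by Lem.\ref{lem:iso} the reachability of that state is mirrored across equivalent minimal BPs. Now suppose $\mathcal{S}$ is any characteristic set; then the inference algorithm applied to $\mathcal{S}$ must return a $B'$ with $L(B')=L(B_n)$, and in particular $B'$ must make $a_\top$ feasible after some $\top$-reaching word. If no feasible word of $\mathcal{S}$ contained $a_\top$, then $a_\top\notin A_\mathcal{S}$ and none of the constraints $\Psi_\mathcal{S}$ would force a state enabling $a_\top$ to be reachable; hence there would be a satisfying assignment yielding a BP consistent with $\mathcal{S}$ in which $a_\top$ is never firable, so it omits the words $w a_\top$ of $L(B_n)$, contradicting that $\mathcal{S}$ is characteristic. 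Therefore $\mathcal{S}$ must contain a positive triple whose word contains $a_\top$; such a word reaches $\top$ and thus has length $\Omega(2^n)$. Since the size of a sample is the sum of the lengths of its words, $\mathcal{S}$ has size $\Omega(2^n)$, which is not polynomial in the $O(n^2)$ size of $B_n$.

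I expect the main obstacle to be controlling the two opposing quantities in the same construction: the shortest $\top$-reaching word must be exponentially long, yet the cutoff must remain quadratic. Faithfully realizing a ripple-carry increment under the rigid broadcast semantics --- one sender, all other processes forced onto their $a??$ responses, and a sink that absorbs every mistimed broadcast --- while bounding the number of processes needed to perform each increment by $O(n^2)$, is the delicate part; in particular, ensuring the carry propagation is \emph{deterministic}, so that each increment block advances the encoded value by exactly one and $\top$ is provably unreachable before saturation, is where the bulk of the careful casework will lie.
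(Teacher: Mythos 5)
Your overall architecture is the right one and matches the paper's: build a polynomial-size fine BP with a distinguished action $a_\top$ whose shortest feasible occurrence is exponentially long, then observe that any characteristic set must contain a positive word containing $a_\top$ (otherwise a consistent BP exists in which $a_\top$ is never firable, so no algorithm can be guaranteed to recover $L(B_n)$), forcing the sample to have exponential size. That closing step is sound. The genuine gap is that the construction itself --- which is the entire mathematical content of the theorem --- is never actually given. You propose an $n$-bit ripple-carry counter encoded in the configuration, but you do not specify how a configuration encodes a counter value, what the increment block of broadcasts is, why the increment is deterministic under the rigid one-sender/all-receivers semantics, why no adversarial scheduling of enabled actions can shortcut the counter and light $\top$ early, and why a feasible saturating word exists at all. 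You explicitly defer exactly these points as ``the bulk of the careful casework,'' but they are where a proof of this theorem lives or dies; as written, the existence of the family is asserted rather than established. You also impose a quadratic \emph{cutoff}, which is both unnecessary (fineness only requires that some cutoff exists) and in tension with your own length bound, since the natural way to force exponentially long shortest witnesses in a BP is to require exponentially many processes.

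By contrast, the paper's construction sidesteps the carry-propagation difficulty entirely: it uses one loop per prime $p_1<\dots<p_k\leq n$, plus helper states, so that $a_\top$ is enabled only when all loops are simultaneously at their final states, which first happens after $p_1\cdots p_k=2^{\Theta(n/\log n)}$ steps by co-primality --- no deterministic increment logic is needed, the synchronization falls out of the least common multiple. The BP has $O(n^2)$ states and an exponential (but finite) cutoff, which suffices for fineness. If you want to salvage your route, you would need to either work out the counter gadget in full under broadcast semantics or switch to the prime-loop idea, where the ``only after exponentially many steps'' claim reduces to elementary number theory rather than to a delicate protocol-correctness argument.
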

The same family used in the proof of Thm.\ref{thm:cs-exp1} also shows that fine BPs can be exponentially smaller than the minimal DFA accepting the same language. 
This is
since in a DFA for every state $q$ the length of the shortest word reaching $q$ is bounded by the number of states in the DFA.

\begin{corollary}
There exists a family of fine BPs for which the corresponding minimal DFA is of exponential size.
\end{corollary}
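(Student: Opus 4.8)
The plan is to reuse the family of fine BPs constructed for Theorem~\ref{thm:cs-exp1}. Recall that each such BP $B=B_{m,n,\ell}$ has size quadratic in $n$, its language is regular (since $B$ is fine, a cutoff exists and hence $L(B)$ is regular, so the minimal DFA is well defined), and it contains a distinguished action $a_\top$ whose \emph{shortest feasible word} $w$ (the shortest $w\in L(B)$ in which $a_\top$ occurs) has length $N$ that is exponential in $n$. I would lower-bound the number of states of the minimal DFA $D$ for $L(B)$ by exhibiting a single state of $D$ whose shortest reaching word is exponentially long, and then invoke the standard fact that in a $k$-state DFA every reachable state is reached by some word of length at most $k-1$. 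So the whole argument reduces to finding one ``hard-to-reach'' state.

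Concretely, let $q=D(w)$ be the state reached on reading $w$. First I would observe that $w\,a_\top\in L(B)$: the action $a_\top$ is enabled only in the top state $\top$, on which it is a self-loop, so the process that broadcasts $a_\top$ remains in $\top$ and $a_\top$ stays enabled; hence it can be fired again, and $a_\top$ is immediately available at $q$. Now take any word $z$ with $D(z)=q$. Since $D$ is deterministic, $D(z\,a_\top)=D(w\,a_\top)$, which is accepting, so $z\,a_\top\in L(B)$. But then $z\,a_\top$ is a feasible word containing $a_\top$, so by minimality of $N$ we get $|z\,a_\top|\geq N$, i.e. $|z|\geq N-1$. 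Thus \emph{every} word reaching $q$ has length at least $N-1$. Combining this with the upper bound of $k-1$ on the length of a shortest reaching word in a $k$-state DFA gives $N-1\leq k-1$, hence $k\geq N$, so the minimal DFA for $L(B)$ has at least $N$ states, which is exponential in $n$. Letting $n$ range over the family then yields the desired family of fine BPs whose minimal DFAs are exponentially large.

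The main obstacle is the small structural observation that $a_\top$ is re-enabled after it fires (so that $w\,a_\top\in L(B)$ and $a_\top$ distinguishes $q$); this is precisely what lets me bound the length of \emph{every} word reaching $q$, rather than merely the particular witness $w$, and it relies only on the self-loop on $\top$ in the construction of Theorem~\ref{thm:cs-exp1}. Everything else is routine: prefix-closedness (Lemma~\ref{clm:monotonicity}) ensures that all prefixes of $w$ are feasible, so $q=D(w)$ is a genuine accepting (non-sink) state of $D$, and the shortest-reaching-word bound for DFAs is classical.
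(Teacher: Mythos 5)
Your proof is correct and follows essentially the same route as the paper: it takes the family from Theorem~\ref{thm:cs-exp1}, uses the exponentially long shortest word containing $a_\top$, and invokes the classical bound on shortest reaching words in a $k$-state DFA. The only difference is that you make explicit (via the self-loop on $\top$ and determinism of $D$) the step the paper leaves implicit, namely that the state $D(w)$ cannot be reached by any short word, which is a welcome clarification rather than a deviation.
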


\section{Consistency is NP-Hard for fine BPs}
\label{sec:conss-np-hard}
We show below that  consistency is NP-hard even for fine BPs. We note that hardness is expected since DFA consistency is NP-hard~\cite{Gold78}, but it  does not directly follow from hardness of DFA consistency. This is since a DFA is not a special case of a fine BP.  However,  a fine BP can simulate a DFA, in a manner prescribed in Lem.\ref{lem:dfa-to-bp}.

Fig.\ref{fig:passive-np-hard} provides a schematic illustration of the simulation. The states in the rectangle are the original states of the DFA, and $\iota$ is the initial state of the DFA. All responses that are not shown in the figure are self-loops, we omit them to avoid clutter. 
In addition, to make the BP fine, we need to allow each state $q$ to enable some action, call it $q$. The simulation would like to ignore these actions, i.e. consider the projection of the words to words without these actions. Formally, let $\Gamma,\Gamma'$ be alphabets such that $\Gamma'\supseteq \Gamma$.
Let $w'$ be a word over $\Gamma'$.
We use $\pi_{\Gamma}(w')$ for the word obtained from $w'$ by removing letters in $\Gamma'\setminus\Gamma$.
If $B$ is a BP over {$A'$, such that} $A'\supseteq A$, we refer to the words in $\{\pi_A(w)~|~w\in L(B)\}$, abbreviated $\pi_A(L(B))$, as the $A$-feasible words of $B$. Lem.\ref{lem:dfa-to-bp} states in which manner the BP simulates the DFA.

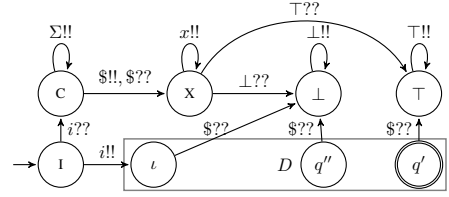
\begin{figure}
\centering
	\scalebox{0.685}{
		\begin{tikzpicture}[->,>=stealth',shorten >=1pt,auto,node distance=1.8cm,semithick,initial text=]
  \tikzstyle{rec}=[rectangle,draw=black!50,thick,minimum width=6.25cm, minimum height = 1.0cm,node distance=4.35cm]
  \node[state, initial] (h1)  {$\scstate{i}$};
  
  \node[state] (i1) [right of = h1] {$\iota$};
  \node[rec] (d1) [right of = h1] {$D$};

  \node[state, accepting] (a1) [right of = i1, node distance = 5.15cm] {$q'$};
  \node[state] (a11) [left of = a1, node distance = 1.85cm] {$q''$};

  \node[state] (top) [ above of = a1, node distance = 1.38cm] {$\top$};
  \node[state] (bot) [ left of = top, node distance = 1.95cm] {$\bot$};
  \node[state] (a) [above of = h1, node distance = 1.38cm] {$\scstate{c}$};
  \node[state] (dol) [right of = a, node distance = 2.5cm] {$\scstate{x}$};

  \path (h1) edge      [above]  node {$i!!$}    (i1);
  \path (h1) edge      [right]  node {$i??$}    (a);
  
  \path (a1) edge      [left]  node {$\$??$}    (top);
  \path (a11) edge      [  left]  node {$\$??$}    (bot);
  \path (a) edge      [loop above]  node {$\Sigma!!$}    (a);
  \path (a) edge [above] node {$\$!!, \$??$} (dol);
  \path (dol) edge      [loop above]  node {$x!!$}    (dol);
  \path (bot) edge      [loop above]  node {$\bot!!$}    (bot);
  \path (top) edge      [loop above]  node {$\top!!$}    (top);
  \path (dol) edge      [above, bend left = 60]  node {$\top??$}    (top);
  \path (dol) edge      [above]  node {$\bot??$}    (bot);
  \path (i1) edge [left] node {$\$??$} (bot); 
\end{tikzpicture}}
\caption{Reduction from DFA-consistency to BP-consistency.}\label{fig:passive-np-hard}
\end{figure}

\begin{restatable}{lemma}{dfatobp}\label{lem:dfa-to-bp}
Let $L$ be a non-trivial regular language over $\Sigma$, and assume $n$ is the number of states in the minimal DFA for $L$.\footnote{A language is non-trivial if it is not the empty set or $\Sigma^*$.}
Let $A=\Sigma \cup \{ i, \$, \top, \bot, x\}$.
\begin{enumerate}[nosep]
    \item \label{lem:dfa-to-bp:feas-words}
    There exists a fine BP $B$ over actions $A'$ such that $A'\supseteq A$ satisfying 
    that 
    {
    $\pi_A(L(B^1))=\{i\}$ and 
    $\pi_A(L(B^n))\subseteq i (\Sigma)^* \$ x^* (\top^* \cup \bot^*) $ for every $n\geq 2$. 
    }
    \item \label{lem:dfa-to-bp:rel-to-dfa} 
    In addition, 
    for every $w\in \Sigma^*$:
    \begin{itemize}
        \item $w\in L \Leftrightarrow (iw\$\top, 2)$ is feasible in $B$.
        \item $w\notin L \Leftrightarrow (iw\$\bot, 2)$ is feasible in $B$.
    \end{itemize}
    \item \label{lem:dfa-to-bp:min-size}
    Moreover, for every $B$ satisfying the above it holds that $B$ has at least $n+5$ states.
\end{enumerate}
\end{restatable}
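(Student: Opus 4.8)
The three items call for different techniques, so the plan is to treat them in turn, investing most of the effort in the size lower bound, which is the real content.

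For the first two items I would simply take $B$ to be the protocol sketched in Fig.~\ref{fig:passive-np-hard}, over the action set $A' = A \cup \{q \mid q \text{ a state of } D\}$ in which each copied DFA-state $q$ is given a private self-looping broadcast $q!!$ (so that $B$ has no hidden states and these actions vanish under $\pi_A$), and every response not drawn in the figure is a self-loop. The whole argument rests on a single invariant, proved by induction on the length of a feasible word: after the unique possible first letter $i$, one process---the one that fired $i!!$ and hence lives in the copy $D$---serves as a \emph{token} tracking the DFA state through the responses $\sigma??$ (which realise $\delta$), while every remaining process sits in $\scstate{c}$ as a \emph{driver}, broadcasting letters of $\Sigma$ and finally $\$$. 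With a single process there is no driver, so after $i$ only private actions survive and $\pi_A(L(B^1))=\{i\}$; with $n\ge 2$ processes the only admissible $A$-actions are $i$, then a block of $\Sigma$-letters (the token walks in $D$), then $\$$ (the token responds to $\top$ from accepting states and to $\bot$ otherwise, while the drivers move to $\scstate{x}$), then $x$'s and finally a block of $\top$'s or $\bot$'s, which is exactly the stated inclusion. The DFA-membership item is then immediate from the same invariant: $iw\$\top$ is feasible in $B^2$ iff the token reaches $\top$ after $iw\$$ iff $\delta(\iota,w)$ is accepting iff $w\in L$, and symmetrically for $\bot$. This part is routine bookkeeping.

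For the lower bound, fix any $B$ meeting the first two items. First, every $iw$ with $w\in\Sigma^*$ is feasible in $B^2$: whichever of $w\in L$ or $w\notin L$ holds, the corresponding $iw\$\top$ or $iw\$\bot$ is feasible, and $iw$ is a prefix of it, so feasible by Lem.~\ref{clm:monotonicity}; moreover $B$-runs are deterministic given the number of processes, so $B^2(iw)$ is a well-defined configuration for every $w$. Reading off the automaton $M$ whose states are the reachable configurations $\{B^2(iw)\mid w\in\Sigma^*\}$, with $\sigma$-edges given by the (total, deterministic) successor map and a configuration declared accepting iff $\$\top$ is feasible from it, the second item shows $M$ recognises $L$, so $M$ has at least $n$ states; that is, there are at least $n$ distinct configurations $B^2(iw)$. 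The gap I must close is that distinct $2$-process configurations need not use distinct states, so I would exploit the first item: each $B^2(iw)$ lights $s_{\$}$ and every $s_\sigma$ (all of $\Sigma$ and $\$$ are feasible before the $\$$) but lights neither $s_\top$ nor $s_\bot$, all with only two processes. Hence $\{s_{\$}\}\cup\{s_\sigma\}$ spans at most two states, and if it spanned two then every $B^2(iw)$ would be pinned to that same pair, contradicting that there are $n\ge 2$ distinct ones. So there is a single driver state $c=s_{\$}=s_\sigma$ lit in every $B^2(iw)$; each such configuration is therefore $u_c+u_{t(w)}$, determined by its \emph{token coordinate} $t(w)$, and distinct configurations force distinct token states---yielding at least $n$ of them.

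It remains to produce five further states disjoint from the token states. The natural candidates are $s_0=s_i$, the driver $c$, $s_\top$, $s_\bot$, and $c^+:=\fsend(\$)$, and these are pairwise distinct by a short feasibility/lit analysis: $s_i$ differs from all four, since with one process none of $\Sigma,\$,\top,\bot$ is firable initially and $s_i$ is never lit after a $\$$ (else $i$ would refire); $c$ differs from $s_\top,s_\bot,c^+$ because $c$ is not lit after a $\$$ whereas each of those is lit after some $\$$; $s_\top\ne s_\bot$ since after $iw\$$ exactly one of $\top,\bot$ is firable; and $c^+$ is lit after every $\$$ whereas $s_\top$ (resp.\ $s_\bot$) is lit there only when $w\in L$ (resp.\ $w\notin L$). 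The main obstacle---and the step I expect to be most delicate---is showing that these five states are disjoint from the $n$ token states, in particular that the driver $c$ is never itself a token (no reachable configuration is $2u_c$) and that $c^+$ is never lit before a $\$$. I would attack this through the exact word-shape of the first item: a token state is lit strictly before the $\$$, from a configuration whose only non-private continuations are $\Sigma$ and $\$$, so if a token coincided with $c$, $c^+$, $s_\top$, $s_\bot$, or $s_i$ one could assemble a feasible word violating the ordering $i\,\Sigma^*\,\$\,x^*(\top^*\cup\bot^*)$ (a repeated $i$ or $\$$, a letter out of order, or a $\Sigma$-letter after $\$$). Combining the $\ge n$ token states with these five disjoint states gives the bound $n+5$.
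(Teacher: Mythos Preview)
Your treatment of items~1 and~2 coincides with the paper's construction and argument. For item~3, your configuration-DFA $M$ is a clean reformulation of the paper's strategy: both extract $n$ ``DFA-like'' states plus five auxiliary states, then argue disjointness. Your fifth state $c^{+}=\fsend(\$)$ differs from the paper's $s_x$, but the overall shape is the same.

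The gap is precisely where you flag it, and your sketched closure does not work for the case $t(w_0)=c$. From the configuration $2u_c$ the only $A$-actions enabled are $\Sigma$ and $\$$, which is fully consistent with the pattern $i\,\Sigma^*\,\$\,x^*(\top^*\cup\bot^*)$; no violating word can be assembled by your word-shape argument. The paper closes this case by a different route: from $2u_c$, firing $\$$ leaves the two processes in $c^{+}$ and $\frec_\$(c)$, and the paper argues that \emph{neither} of these is $s_\top$ or $s_\bot$, so neither $\top$ nor $\bot$ is enabled---directly contradicting item~2. You already have $c^{+}\notin\{s_\top,s_\bot\}$; what remains is to rule out $\frec_\$(c)\in\{s_\top,s_\bot\}$, which needs a separate argument (the paper is itself brief here).

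A parallel issue affects your choice of $c^{+}$ as the fifth state: if $c^{+}$ enables only actions in $A'\setminus A$, then having $c^{+}$ lit before $\$$ produces no $A$-letter out of order, so nothing prevents $c^{+}$ from coinciding with a token state such as $\iota=t(\epsilon)$. The paper's choice of $s_x$ sidesteps this, since $s_x$ lit before $\$$ would make $x$ firable and violate the pattern---though that in turn relies on $x$ being feasible somewhere, which item~1 (stated with $\subseteq$) does not strictly guarantee.
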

\begin{proof}[Proof sketch]

In order to keep the number of processes in the DFA part exactly one,
the $i$ action from the initial state of the BP (state $\scstate{i}$) sends one process to $\iota$ and the rest to state $\scstate{c}$. 
 To allow every letter $\sigma\in\Sigma$ to be taken from every state of the DFA,
the state $\scstate{c}$ of the BP enables all letters in $\Sigma$, and the original $(q,\sigma,q')$ transitions of the DFA are transformed into responses $(q,\sigma??,q')$. Next, to deal with the fact that the
language of a DFA is defined by the set of words that reach an accepting state whereas the language of a BP is the set feasible words, we introduce the letters $\$$, $\top$ and $\bot$. Upon $\$$ 
the single process in one of the DFA states moves into either state $\bot$ or state $\top$ depending on whether it was in an accepting state or a rejecting state; and all processes in states $\scstate{c}$ move to state $\scstate{x}$ where they wait to follow the respective $\bot$ or $\top$.  
 Now only $\{x,\top\}$ or $\{x,\bot\}$ are enabled. An $x!!$ transition would not change this situation whereas a $\top$ (resp. $\bot$) transition will ensure only $\top$ (resp. $\bot$) can be taken henceforth.
The complete proof, given in App.~\ref{proof:dfa-2-bp} shows the three requirements of the lemma are satisfied.
\end{proof}

\begin{restatable}{theorem}{bpnphard}\label{thm:bp-consistency-nphard}
BP consistency is NP-hard.
\end{restatable}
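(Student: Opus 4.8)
The plan is to reduce from \emph{DFA consistency}, which is NP-hard~\cite{Gold78}, using the gadget of Lemma~\ref{lem:dfa-to-bp} as the instance translation. An instance of DFA consistency is a pair of finite sets $P,N\subseteq\Sigma^*$ of positive and negative examples together with a bound $k$, asking whether some DFA with at most $k$ states accepts every word in $P$ and rejects every word in $N$. Since Gold's hardness already holds when both $P$ and $N$ are nonempty, I may assume this, so that any consistent DFA recognizes a \emph{non-trivial} language and Lemma~\ref{lem:dfa-to-bp} applies. From such an instance I build the BP-consistency instance $(\mathcal{S},k+5)$ over the alphabet $A=\Sigma\cup\{i,\$,\top,\bot,x\}$, where $\mathcal{S}$ contains, for every $w\in P$, the triples $(iw\$\top,2,\true)$ and $(iw\$\bot,2,\false)$, and, for every $w\in N$, the triples $(iw\$\bot,2,\true)$ and $(iw\$\top,2,\false)$. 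I also add a polynomial set of negative examples asserting the infeasibility of the malformed words that item~(1) of the lemma forbids, steering any consistent BP toward the shape $i\Sigma^*\$x^*(\top^*\cup\bot^*)$. Both $\mathcal{S}$ and the bound $k+5$ are clearly computable in polynomial time.

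For the forward direction, suppose a DFA $D$ with at most $k$ states is consistent with $(P,N)$, and let $L=L(D)$, a non-trivial language whose minimal DFA has $n\leq k$ states. The construction underlying Lemma~\ref{lem:dfa-to-bp}(1) yields a fine BP $B$ with exactly $n+5\leq k+5$ states such that $w\in L \Leftrightarrow (iw\$\top,2)$ is feasible and $w\notin L\Leftrightarrow (iw\$\bot,2)$ is feasible. Since $D$ accepts all of $P$ and rejects all of $N$, every positive triple of $\mathcal{S}$ is feasible in $B$ and, by the contrapositive directions of item~(2), every negative triple is infeasible; the structural negatives hold by the shape guaranteed in item~(1). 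Hence $B$ is consistent with $\mathcal{S}$ and witnesses a yes-answer to BP consistency with bound $k+5$.

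For the backward direction, suppose some BP $B$ with at most $k+5$ states is consistent with $\mathcal{S}$, and define its \emph{accepted language} $L_B=\{w\in\Sigma^* : (iw\$\top,2)\text{ is feasible in }B\}$. The positive triples force $P\subseteq L_B$ and the negatives $(iw\$\top,2,\false)$ force $N\cap L_B=\emptyset$, so $L_B$ is consistent with $(P,N)$ and, since $P,N\neq\emptyset$, non-trivial. It remains to bound the size of a DFA for $L_B$. Because the added structural examples force $B$ to conform to the shape of item~(1) and to realize the acceptance/rejection behavior of item~(2) for $L_B$, the lower bound of item~(3) applies: any such $B$ has at least $n_{L_B}+5$ states, where $n_{L_B}$ is the number of states of the minimal DFA for $L_B$. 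Combining this with $|B|\leq k+5$ gives $n_{L_B}\leq k$, so the minimal DFA for $L_B$ is a DFA with at most $k$ states consistent with $(P,N)$, witnessing a yes-answer to the original instance and completing the equivalence.

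The main obstacle is precisely this last step. The paper's own succinctness result (the corollary following Theorem~\ref{thm:cs-exp1}) shows that a BP can be exponentially smaller than any DFA for its language, so a small BP consistent with $\mathcal{S}$ does \emph{not} in general yield a small DFA for $L_B$. The gadget is designed to neutralize this gap: the single process threading through the DFA part simulates $D$ deterministically, preventing the BP from exploiting parameterized succinctness on the encoded language, and Lemma~\ref{lem:dfa-to-bp}(3) certifies this as the state lower bound $n_{L_B}+5$. The delicate point I would discharge carefully is to guarantee that \emph{every} BP consistent with $\mathcal{S}$, not only the intended one, conforms closely enough to the simulation shape for item~(3) to apply; in particular one must rule out a consistent BP that makes both $iw\$\top$ and $iw\$\bot$ feasible for words $w$ outside $P\cup N$, since that would break the correspondence between $L_B$ and the hypotheses of the lemma.
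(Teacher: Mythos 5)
Your proposal is correct and follows essentially the same route as the paper: a reduction from DFA consistency with bound $k+5$, using the gadget of Lem.\ref{lem:dfa-to-bp}, positive/negative triples of the form $(iw\$\top,2,\cdot)$ and $(iw\$\bot,2,\cdot)$, plus structural negatives to force the simulation shape so that the state lower bound of item~(3) transfers. The "delicate point" you flag at the end---ensuring that \emph{every} sample-consistent BP, not just the intended one, conforms to the shape needed for item~(3)---is exactly the step the paper also only sketches (it lists the concrete structural triples, e.g.\ using doubled suffixes $\top\top$ and $\bot\bot$ to pin down self-loops, and then appeals to "similar arguments" as in the proof of item~(3)).
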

\begin{proof}[Proof Sketch]
The proof is by reduction from DFA consistency which was shown to be NP-hard in \cite{Gold78}.
Given an input to DFA-consistency, namely a sample $\mathcal{S}$ and a number $k$, we produce 
a sample $\mathcal{S}'$ and $k'=k+5$ as an input to BP-consistency so that the relation between the minimal BP consistent with
$\mathcal{S'}$ and the minimal DFA consistent with $\mathcal{S}$ is as described in Lem.\ref{lem:dfa-to-bp}. 

The sample $\mathcal{S}'$ consists of various triples ensuring  a BP consistent with $\mathcal{S}'$ has the structure given in Fig.\ref{fig:passive-np-hard}. For instance, $(i,1,\true)$ and $(ii,2,\false)$ enforce that $i$ is enabled in the initial state but is not a self-loop.  
A pair $(w,\true)$ (resp. $(w,\false)$) in $\mathcal{S}$ is altered to triple  $(iw\$\top\top,2,\true)$ (resp. $(iw\$\bot\bot,2,\true)$ in $\mathcal{S}'$, making sure that  words accepted (resp. rejected) by the DFA create respective feasible words ending with $\top$'s (resp. $\bot$'s). Each such pair carries  some additional triples added to $\mathcal{S}'$  to continue enforcing the structure of 
Fig.\ref{fig:passive-np-hard}.
\end{proof}
An alternative proof, inspired by (Lingg et al.2024) via a direct reduction from all-eq-sat is available in App.~\ref{proof:dfa-2-bp}.
\nocite{Lingg2024}

Note that given a BP $B$ and a pair $(w,n)\in A^*\times\mathbb{N}$ it is possible to check in polynomial time whether $w$ is feasible in $B^n$ by developing the state-vector $n \cdot \textbf{u}_0$ along the word $w$ in $B$. Consequently, and since a BP with $m$ states over set of actions $A$ can be described in size polynomial in $m$ and $|A|$, if $m$ is given in unary then BP-consistency is NP-complete.

\section{BPs are not polynomially predictable}\label{sec:BP-predict}
Here we show that fine BPs are not polynomially predictable with membership queries. The learning paradigm of polynomial predictability of a class $\mathcal{C}$ can be explained as follows. 
 The learner has access to an oracle answering \emph{membership queries} (\mq)  with regard to the target language $C\in\mathcal{C}$ or \emph{draw queries} (\dr) that can be implemented using \mq.
A membership query receives a word $w$ as input and answers whether $w$ is or is not in $C$. A draw query receives no inputs and returns a pair $(w,b)$ where $w$ is a word that is randomly chosen according to some probability distribution $D$ and $b$ is $\mq(w)$. We assume some bound $\ell$ on the length of the relevant examples, so that $D$ is a probability distribution on the set of relevant words. We assume the learner knows $\ell$ but $D$ is unknown to her.
At some point, the learner is expected to ask for a word whose membership it needs to predict, in which case it is handed a word $w$ (drawn randomly according to the same distribution $D$) and it should then answer whether $w$ is or is not in $C$.
We say that the class $\mathcal{C}$ is \emph{polynomially predictable} with membership queries, if given a bound $s$ on the size of the target language, the mentioned bound $\ell$ on the length of relevant examples, 
and an accuracy parameter $\varepsilon$ between $0$ and $1$,
there exists a  learner that will  classify the word to predict correctly with probability at least $(1-\varepsilon)$, after asking a number of queries that is polynomial in the size of the minimal BP of the target language.
We show that under plausible cryptography assumptions fine BPs (thus BPs in general) are not polynomially predictable.

\begin{figure}
\centering
	\scalebox{0.625}{
		\begin{tikzpicture}[->,>=stealth',shorten >=1pt,auto,node distance=1.55cm,semithick,initial text=]
  \tikzstyle{rec}=[rectangle,draw=black!50,thick,minimum width=4.675cm, minimum height = 1.05cm,node distance=3.53cm]
  \node[state, initial] (h1)   {$\scstate{h}_1$};
  \node[state] (g1) [right of = h1, node distance=1.65cm] {$\scstate{g}_1$};
  
  \node[state] (h2) [below  of = h1] {$\scstate{h}_2$};
  \node[state] (g2) [right of = h2, node distance=1.65cm] {$\scstate{g}_2$};
  \node[label] (h3) [below  of = h2, node distance=1.05cm] {$\circ\circ\circ$};
  \node[state] (hk) [below  of = h3, node distance=1.05cm] {$\scstate{h}_{k}$};
  \node[state] (gk) [right of = hk, node distance=1.65cm] {$\scstate{g}_k$};
  \node[state] (i1) [right of = g1, node distance=1.65cm] {$\iota_1$};
  \node[state] (i2) [right of = g2, node distance=1.65cm] {$\iota_2$};
  \node[state, accepting] (ik) [right of = gk, node distance=1.65cm] {$\iota_{k}$};
  \node[rec] (d1) [right of = g1] {$D_1$};
  \node[rec] (d2) [right of = g2] {$D_2$};
  \node[rec] (dk) [right of = gk] {$D_k$};
  \node[state] (a1) [right of = i1, node distance = 3.5cm] {$q'_1$};
  \node[state] (a2) [right of = i2, node distance = 3.7cm] {$q'_2$};
  \node[state, accepting] (a22) [left of = a2, node distance = 1.1cm] {$q''_2$};
 
  \node[state, accepting] (ak) [right of = ik, node distance = 3.7cm] {$q'_k$};
  \node[state ] (a2k) [left of = ak, node distance = 1.1cm] {$q''_k$};

  \node[state] (bot) [ right of = a2, node distance = 2.4cm] {$\bot$};
  \node[state] (s) [below of = hk, node distance = 1.4cm] {$\sstart$};
  \node[state] (c) [right of = s, node distance= 1.9cm] {$\scstate{c}$};
  \node[state] (dol) [right of = c, node distance = 7.6cm] {$\scstate{x}$};

  \path (h1) edge      [above]  node {$h_1!!$}    (g1);
  \path (g1) edge      [above]  node {$s??$}    (i1);
  \path (h1) edge      [left]  node {$h_1??$}    (h2);
  \path (h2) edge      [above]  node {$h_2!!$}    (g2);
  \path (g2) edge      [above]  node {$s??$}    (i2);
  \path (h2) edge      [left]  node {$h_2??$}    (h3);
  \path (h3) edge      [left]  node {$h_{k-1}??$}    (hk);
  \path (hk) edge      [above]  node {$h_{k}!!$}    (gk);
  \path (gk) edge      [above]  node {$s??$}    (ik);
  \path (hk) edge      [left]  node {$h_{k}??$}    (s);
  \path (a1) edge      [above]  node {$\$??$}    (bot);
  \path (a2) edge      [above]  node [yshift=-0.05cm]{$\$??$}    (bot);
  \path (a22) edge      [ right]  node {$\$??$}    (dol);
  \path (ak) edge      [above]  node {$\$??$}    (dol);
  \path (bot) edge      [right, bend left = 10]  node {$\bot!!\bot??$}    (dol);
  \path (s) edge      [above]  node {$s!!, s??$}    (c);
  \path (c) edge      [in=30,out=60, loop,looseness=7]  node [right, yshift=-0.1cm] {$\Sigma!!$}    (c);
  \path (c) edge [above] node [xshift = -0.5cm]{$\$!!, \$??$} (dol);
  \path (dol) edge      [loop right]  node {$x!!$}    (dol);
  \path (ik) edge [above] node [xshift=0.4cm, yshift =-0.55cm]{$\$??$} (dol);
  \path (i1) edge [above, bend left = 42] node {$\$??$} (bot);
  \path (i2) edge [above, bend left = 19] node [yshift = -0.06cm]{$\$??$} (bot);
  \path (a2k) edge [below, bend left = 20] node[xshift=0.2cm, yshift=0.15cm] {$\$??$} (bot);
		\end{tikzpicture}}
\caption{A BP simulating intersection of $k$ DFAs.}\label{fig:active-hard1}
\end{figure}

\begin{restatable}{theorem}{prednphard}\label{thm:bp-predic-hard}
Assuming the intractability of any of
the following three problems: testing quadratic residues
modulo a composite, inverting RSA encryption, or factoring Blum integers,
fine BPs are not polynomially predictable with $\mq$.
\end{restatable}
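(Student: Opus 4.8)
The plan is to exhibit a \emph{prediction-preserving reduction with membership queries} from a class already known to be hard under the stated assumptions, namely the class $\cap\text{-DFA}_k$ of languages defined as the intersection of $k$ deterministic finite automata over a common alphabet $\Sigma$. By the classical cryptographic lower bounds of Kearns and Valiant, combined with the membership-query framework of Angluin and Kharitonov, $\cap\text{-DFA}_k$ is not polynomially predictable with $\mq$ under the intractability of quadratic-residue testing, RSA inversion, or factoring Blum integers: a polynomial number of synchronized DFAs can compute exactly these cryptographic predicates, and the intersection hides which automaton forces rejection, so $\mq$s provide no help. It therefore suffices to reduce $\cap\text{-DFA}_k$ to fine BPs so that (a) concepts map to fine BPs of polynomially related size, (b) instances map to BP-feasibility queries preserving classification, and (c) any BP membership query can be answered using the $\cap\text{-DFA}_k$ oracle.

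First I would describe the concept map, realized by the BP of Fig.\ref{fig:active-hard1}. Given DFAs $D_1,\dots,D_k$, the initialization chain $h_1,\dots,h_k$ followed by $s$ places exactly one process into the initial state $\iota_i$ of each copy $D_i$ and keeps one process in the broadcasting state $\scstate{c}$, which enables all of $\Sigma$. Broadcasting a letter $\sigma\in\Sigma$ then makes the $\scstate{c}$-process fire $\sigma!!$ while each DFA-process takes the response $\sigma??$, so all $k$ automata advance \emph{simultaneously} — exactly the intersection semantics. The acceptance gadget ($\$$ together with $\top,\bot,\scstate{x}$) is arranged, as in Lem.\ref{lem:dfa-to-bp}, so that after $\$$ a $\top$-action is feasible iff every DFA-process sits in an accepting state and a $\bot$-action is feasible otherwise. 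With the instance map $f(w)=h_1\cdots h_k\,s\,w\,\$\,\top$ and a fixed cutoff number of processes (around $k{+}1$: one per DFA copy plus the broadcaster), $f(w)$ is feasible iff $w\in\bigcap_i L(D_i)$. The BP has size polynomial in $\sum_i|D_i|$ and is fine: the cutoff is finite, and every state enables a dedicated action (projected away by $\pi_A$), so there are no hidden states, exactly as in the construction of Lem.\ref{lem:dfa-to-bp}.

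The crucial step, and where I expect the main difficulty, is making the reduction \emph{query-preserving}: we must answer an arbitrary BP-feasibility query $(y,n)$ using only $\mq$s to the hidden intersection concept. The key observation is that the entire BP schema — the chains $h_i,g_i$, the state $\scstate{c}$, the acceptance gadget, and the transition skeleton of each $D_i$ viewed as a \emph{complete} DFA — is fixed and independent of the concept; the only concept-dependent information that feasibility can expose is, at the $\$$-step, whether all automata accept the $\Sigma$-prefix read so far. Since the dedicated fineness actions induce only self-loop responses (by the design of Lem.\ref{lem:dfa-to-bp}) and never change which DFA states are lit, we may first project $y$ with $\pi_A$ and then test its shape against the feasible pattern dictated by the gadget (cf.\ Lem.\ref{lem:dfa-to-bp:feas-words}). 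If $y$ does not match, we answer ``infeasible'' outright; if $y=h_1\cdots h_k\,s\,w\,\$\,\top\cdots$ we answer by a single $\mq$ on $w$ to the source oracle, and the $\bot$-case is the complemented answer. Feasibility \emph{before} $\$$ is always true by completeness, so it is concept-independent. Hence every BP query reduces to at most one source query.

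Finally I would assemble the prediction protocol. A hypothetical polynomial predictor for fine BPs with $\mq$ is run as a subroutine: its $\dr$ requests are serviced by calling the source $\dr$ to obtain $(w,b)$ with $w$ drawn from $D$ and $b=[\,w\in\bigcap_i L(D_i)\,]$, then returning $(f(w),b)$, which pushes the distribution forward and preserves the length bound up to the additive $k+O(1)$; its $\mq$ requests are answered by the query-simulation above. When it asks to predict $f(w^\ast)$ we output its answer as the prediction for $w^\ast$, which is correct because $f(w^\ast)$ is feasible iff $w^\ast\in\bigcap_i L(D_i)$. Since the constructed BP (hence its minimal equivalent) has size polynomial in the source concept, a predictor polynomial in the minimal BP size yields a predictor with $\mq$ for $\cap\text{-DFA}_k$ whose complexity is polynomial in the source concept size, contradicting the cryptographic hardness. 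This establishes the theorem.
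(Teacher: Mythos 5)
Your proposal is correct and follows essentially the same route as the paper: a prediction-preserving reduction from the intersection of DFAs (hard by Angluin--Kharitonov under the stated assumptions), using exactly the BP of Fig.\ref{fig:active-hard1} and the same query-simulation strategy of pattern-matching BP feasibility queries against the fixed skeleton and issuing at most one source \mq\ on the $\Sigma$-infix. The only cosmetic difference is that you read off acceptance via feasibility of a $\top$-suffix, whereas the paper uses infeasibility of $\ldots\$\bot$ and negates the oracle's answer; both work, and the paper is slightly more explicit than you are about how the process count $n$ in a query $(y,n)$ is handled for prefixes of the initialization chain.
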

\begin{proof}[Proof Sketch]
The proof is via a reduction from the class $\mathcal{D}$ of intersection of DFAs, for which Angluin and Kharitonov have shown that $\mathcal{D}$ is not polynomially predictable under the same assumptions~\cite{AngluinK95}.
We show that given a predictor $\algo{B}$ for fine BPs we can construct a predictor $\algo{D}$ for the intersection of DFAs as follows.
Given a set 
 $D_1,D_2,\ldots,D_k$ of DFAs,
 we can construct a BP $B$ as shown in Fig.\ref{fig:active-hard1}
 such that $B$ simulates the run of the $k$ DFAs together.
 As in the proof of Lem.\ref{lem:dfa-to-bp} we can send one process to simulate any of the DFAs. Here we need $k$ processes to send a process to the initial state of each of the DFAs, and an additional process to enable all letters in $\Sigma$. Thus, the cutoff is $k+1$. The BP detects whether a given word $w$ is accepted by all the DFAs by checking whether $uw\$\bot$ is infeasible in $B$ where $u$ is some initialization sequence  that is required to send the processes to the initial states of the DFAs. 
\end{proof}

\section{Conclusion}
We investigated the learnability of the class of fine broadcast protocols.  
To the best of our knowledge, this is the first work on learning concurrent models  that does not assume a fixed number of processes interact.
 On the positive, we showed a passive learning algorithm that can infer a BP consistent with a given sample, and even return a minimal equivalent BP if the sample is sufficiently complete.
On the negative, we showed that the consistency problem for fine BPs is NP-hard; {characteristic sets may be inevitably of exponential size;} and the class is not polynomially predictable.

\bibliography{learn_bp.bib}
\appendix

\begin{center}
{\large{
    Proofs Appendix 
}}
\end{center}

\section{Complete Proofs for Section~\ref{sec:properties}}\label{app:proofs-properties}

\prefclose*

\begin{proof}[Proof of Lem.\ref{clm:monotonicity}]
    Prefix-closedness  holds since if $a_1a_2\ldots a_n$ is feasible, then for every $1\leq i\leq n$ the action $a_i$ is feasible after reading the prefix  $a_1a_2\ldots a_{i-1}$.
    
    For monotonicity, for two state vectors $\textbf{p}$ and $\textbf{q}$ we say that $\textbf{q}\geq \textbf{p}$ if for every $i\in[|S|]$ we have that $\textbf{q}[i]\geq \textbf{p}[i]$. Note that if $a\in A$ is enabled in $\textbf{p}$ then it is also enabled in $\textbf{q}$. 
    Let $w=a_1a_2\ldots a_m$ and let 
    $\textbf{p}_0, a_1, \textbf{p}_1, a_2, \ldots, a_{m}, \textbf{p}_m$ be the execution of $B^k$ on $w$.
    We can construct an execution $\textbf{q}_0, a_1, \textbf{q}_1, a_2, \ldots, a_{m}, \textbf{q}_m$ of $B^\ell$ by induction on the length of $w$ such that for every $i$ we have that $\textbf{q}_i>\textbf{p}_i$ entailing $w$ is feasible in $B^\ell$ as well.
\end{proof}

\stepstep*

\begin{proof}[Proof of Lem.\ref{lem:progress-ce}]
Suppose ${w \in L(B^m)}$ and ${{wa} \notin L(B^m)}$ yet ${wa} \notin L(B^{m+1})$. 
I.e., neither one of the processes that take some sending transitions when executing $w$ in $B^{m+1}$, 
nor one of the processes only taking receiving transitions (including the additional process compared to $B^m$) 
is in the state that has the sending transition on $a$. 
Since any further additional process will behave in the same way as the additional process in $B^{m+1}$, ${wa}$ can never become feasible. Contradicting that it is feasible in $B^n$.
\end{proof}
\twomin*
\begin{proof}[Proof of Lem.\ref{lem:iso}]
The proof is by induction on the number $m$ of interacting processes  and the length  of $w$.
For $m=1$ and $w=\epsilon$, we have
$\textbf{p}_\epsilon=\textbf{u}_{i}$ and $\textbf{q}_\epsilon=\textbf{u}_{j}$ for some $i,j{\in}\{0,\ldots,|S|\!-\!1\}$ where  $n{=}|B_1|=|B_2|$.
It follows that $s_i$ and $s_j$ are the initial states of $B_1$ and $B_2$, resp. Thus, $\fact(s_i)=\fact(s_j)$, as otherwise there is an action $a$ that is enabled in $B_1^1$ and not in $B_2^1$ or  vice versa (and therefore also in any $B_1^m$ and $B_2^m$ with $m {\geq} 1$), contradicting that $L(B_1)=L(B_2)$. Hence we can set $h(i)=j$ and the claim holds. 

For $m=1$ and $w=ua$ for $u\in A^*$ and $a\in A$,
by the induction hypothesis we know that for every state $i$, 
$\textbf{p}_{w}[i]$ is lit if and only if $\textbf{q}_{w}[h(i)]$ is lit.
Since we have only one process all global states are unit vectors. Thus, it must be that 
$\textbf{p}_{u}=\textbf{u}_{i}$ and $\textbf{q}_{u}=\textbf{u}_{j}$ for some  $i,j\in\{0,\ldots,|S|\!-\!1\}$.
That is,  $i$ and $j$ are the indices of the states in $B_1,B_2$ from which $a$ is enabled, resp.
Let $i',j'\in\{0,\ldots,|S|\!-\!1\}$ be the indices of the states that $B_1,B_2$ reach after reading $ua$. Then $\textbf{p}_{ua}=\textbf{u}_{i'}$ and $\textbf{q}_{ua}=\textbf{u}_{j'}$. Since this is true for any action $b$ that is feasible in $B_1$ or $B_2$ after $u$, $h(i')=j'$ satisfies the claim. Note that this also shows that any word $w$ that is feasible in $B_1^1$ is also feasible in $B_2^1$ and vice versa.

Assume the claim holds for $m$ we show it holds for $m+1$. Consider a word $w$. 
Let $\textbf{p}_w$ and $\textbf{q}_w$ be the state vectors $B_1^m$ and $B_2^m$ reach after reading $w$.
Then by the induction hypothesis, $L(B_1^m)=L(B_2^m)$ and for every state $i$ we have that
$\textbf{p}_{w}[i]$ is lit if and only if $\textbf{q}_{w}[h(i)]$ is lit.
Let 
$\textbf{p}'_w$ and $\textbf{q}'_w$ be the state vectors $B_1^{m+1}$ and $B_2^{m+1}$ reach after reading $w$.
Then $\textbf{p}_w$  and $\textbf{p}'_w$ are the same for every $i\in \{0,\ldots,|S|\!-\!1\}$ but one (and similarly for the $\textbf{q}$'s). For all of these indices the claim holds by the induction hypothesis. 
Let $j$ be the index with $\textbf{p}_w[j] \neq \textbf{p}'_w[j]$, i.e., $\textbf{p}'_w[j] = \textbf{p}_w[j]+1$.
If $\textbf{p}_w[j] \geq 1$, then by the induction hypothesis 
for every state $i$ we have that
$\textbf{p}_{w}[i]$ is lit if and only if $\textbf{q}_{w}[h(i)]$ is lit. In particular, this  holds for~$j$.

Otherwise, $\textbf{p}'_w[j]=1$, which implies that there is at least one action $a$ enabled from $\textbf{p}'_w$ that is not enabled from $\textbf{p}_w$.
By induction hypothesis, $L(B_1^m)=L(B_2^m)$, and therefore we know that $a$ is also not enabled from $\textbf{q}_w$.
Moreover, by Lemma~\ref{lem:progress-ce}, if $w \in L(B_2^m)$, $wa \in L(B_2)$ and $wa \notin L(B_2^m)$, then $wa \in L(B_2^{m+1})$. 
Thus, there must be a local state $k$ enabling $a$ such that $\textbf{q}_w[k]=0$ and $\textbf{q}'_w[k]=1$.
Hence, letting $h(j)=k$ satisfies the claim.
\end{proof}

\section{Complete Proofs for Section~\ref{sec:teachbility}} 

\reach*
\begin{proof}[Proof of Lem.\ref{lem:reachability-exhaust}]
The proof is by induction, first on $n$ then on the length of $w$.
For $n=1$, the construction of the tree clearly guarantees that all states reachable with one process have a respective node in the tree.

Suppose $\textbf{p}$ is reachable with $w$ in $B^n$ for $n>1$.
By Lemma \ref{lem:progress-ce}, there exists a prefix $u$ of $w$,
such that $u$ is feasible in $B^{n-1}$. By the induction hypothesis, $u$ is a node of $\mathcal{T}_{n-1}$. Suppose $w=ua_1a_2\ldots a_m$.
We can show by induction on $1\leq n$ that $ua_1a_2\ldots a_i$
is a node of $\mathcal{T}_n$ for every $i \leq m$ by simply following the tree construction.
\end{proof}
\simeq*
\begin{proof}[Proof of Lem.\ref{lem:sim-eq}]
Clearly $\sim_\mathcal{S}$ is symmetric and reflexive. To see that it is transitive, we first refer to Lem.~\ref{lem:reachability-exhaust} to deduce that there exists a node $v$ in $\mathcal{T}$ {in which $s$ is lit}. Let $n$ be the minimal for which $v$ is in $\tree{T}_n$. It follows that for every action $a$ that is feasible after $v$ with $n$ processes {we have} $(va,n,\true)\in\mathcal{S}$ and 
for every action $a$ that is infeasible after $v$ with $n$ processes we have $(va,n,\false)\in\mathcal{S}$.
Therefore, for any two actions we have $a\apart{\mathcal{S}}b$ iff $a$ and $b$ are not enabled from the same state, and $a\sim_\mathcal{S} b$ otherwise. Assume now that $a\sim_\mathcal{S} b$, $b\sim_\mathcal{S} c$. Then $a$ and $b$ are enabled from the same state, and $b$ and $c$ are enabled from the same state, implying $a$ and $c$ are enabled from the same state, i.e., $a\sim_\mathcal{S} c$ as required.
\end{proof}

\begin{figure}
\begin{center}
\scalebox{0.6}{
\begin{tikzpicture}[->,>=stealth',shorten >=1pt,auto,node distance=2.0cm,semithick,initial text=]
    \node[state,initial]          (i1)   {$\scstate{i}_1$};
    \node[state]    (i2) [below of=i1]  {$\scstate{i}_2$};	
    \node[state, color=blue,  fill=blue!10]   (out1)  [below of = i2]   {$\scstatep{1}$};
    \node[state, color=red, fill=red!10]   (help)  [left of=out1 , node distance = 3.85cm]   {$\scstate{h}_1$};
    \node[label, color=red]   (hh)  [below  of=help]   {$\begin{array}{c}\circ\\[-2mm] \circ\\[-2mm] \circ\end{array}$};
    \node[state, color=red, fill=red!10]   (helper)  [below of=hh]   {$\scstate{h}_{\ell}$};
    \node[state]   (sink)  [below of=helper]   {$\bot$};
		
    \node[state, color=blue,  fill=blue!10]   (out2)  [below  of = out1]   {$\scstatep{2}$};	
    \node[label, color=blue]   (outt)  [below  of=out2]   {$\begin{array}{c}\circ\\[-2mm] \circ\\[-2mm] \circ\end{array}$};
    \node[state, color=blue,  fill=blue!10]   (outer)  [below  of=outt]   {$\scstatep{n-1}$};
    \node[state, color=blue,  fill=blue!10]   (outer1)  [below  of=outer]   {$\scstatep{n}$};		
    \node[state, color=blue,  fill=blue!10]   (in1)  [right of=i2, node distance = 3.85cm]   {$\scstate{1}$};		
    \node[state, color=blue,  fill=blue!10]   (in2)  [below  of=in1]   {$\scstate{2}$};	
    \node[label, color=blue]   (inn)  [below  of=in2]   {$\begin{array}{c}\circ\\[-2mm] \circ\\[-2mm] \circ\end{array}$};
    \node[state, color=blue,  fill=blue!10]   (inner)  [below  of=inn]   {$\scstate{m}$};			
    \node[state]    (top) [below of=inner]  {$\top$};
    
    \path (i1) edge [bend left =25] node {$i_1!!$} (in1);
    \path (i1) edge [] node {$i_1??$} (i2);	
    \path (i2) edge [bend right = 20] node {$i_2!!$} (help);
    \path (i2) edge [] node {$i_2??$} (out1);
    \path (help) edge [left, color=red] node {$h_1!!$} (hh);
    \path (hh) edge [left, color=red] node {$h_{\ell -1}!!$} (helper);
    \path (helper) edge [left, color=red] node {$h_{\ell }!!$} (sink);
    \path (outer1) edge [bend right =35, left, blue!85] node {$H??$} (out1);
    \path (inner) edge [bend right =35, right, blue!85] node {$A??, H??$} (in1);
    \path (in1) edge [left, blue!85] node {$A??, H??$} (in2);
    \path (in2) edge [left, blue!85] node {$A??, H??$} (inn);
    \path (inn) edge [left, blue!85] node {$A??, H??$} (inner);
    \path (out1) edge [left, blue!85] node {$a_1??$} (out2);
    \path (out1) edge [left] node {$a_1!!$} (sink);
    \path (out2) edge [left, blue!85] node {$a_2??$} (outt);
    \path (out2) edge [above] node {$a_2!!$} (sink);
    \path (outt) edge [left, blue!85] node {$a_{n'-2}??$} (outer);
    \path (outer) edge [above] node {$a_{n'-1}!!$} (sink);
    \path (outer) edge [left, blue!85] node {$a_{n'-1}??$} (outer1);
    \path (outer1) edge [right] node {$c!!$} (sink);
    \path (inner) edge [left] node {$c??$} (top);
    \path (in1) edge [loop left, left, gray] node  {$b_1!!$} (in1);
    \path (in2) edge [loop left, left, gray] node  {$b_2!!$} (in2);
    \path (inner) edge [loop left, left, gray] node  {$b_m!!$} (inner);
    \path (top) edge [loop left, left] node  {$a_{\top}!!$} (top);
\end{tikzpicture}}
\end{center}
\caption{
The BP $B_{m,n,\ell}$ from a family
of fine BP's with quadratic cutoffs. }\label{fig:bp-complex-family}
\end{figure}
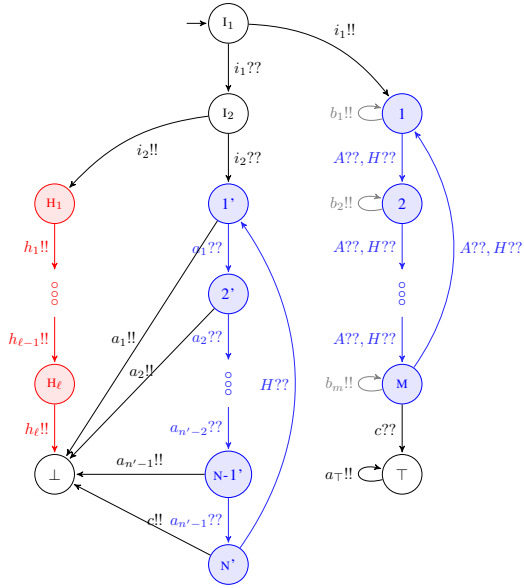

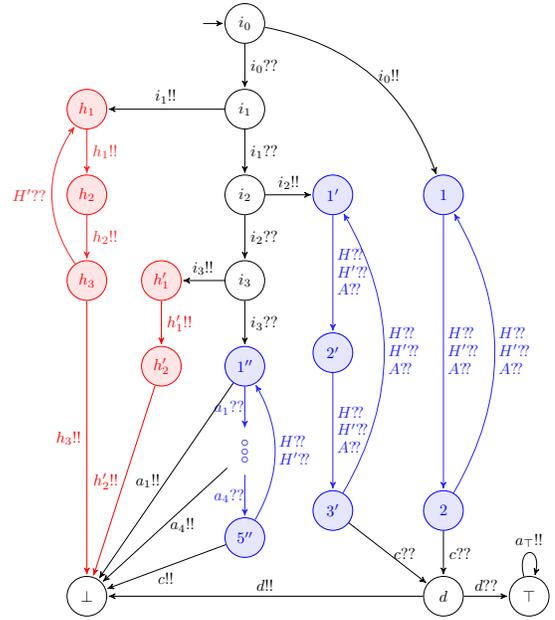
\begin{figure}[t]
\begin{center}
	\scalebox{0.6}{
		\begin{tikzpicture}[->,>=stealth',shorten >=1pt,auto,node distance=1.9cm,semithick,initial text=]
  \node[state, initial] (i0) {$i_0$};
  \node[state] (i1) [below of = i0] {$i_1$};
  \node[state] (i2) [below of = i1] {$i_2$};
  \node[state] (i3) [below of = i2] {$i_3$};
  \node[state, color=blue,  fill=blue!10] (111) [below of = i3] {$1''$};
  \node[label, color=blue]   (double)  [below  of=111]   {$\begin{array}{c}\circ\\[-2mm] \circ\\[-2mm] \circ\end{array}$};
  \node[state, color=blue,  fill=blue!10] (p3) [below of = double] {$5''$};
  \node[state, color=blue,  fill=blue!10] (11) [right of = i2, node distance = 1.95cm] {$1'$};
  \node[state, color=blue, fill=blue!10] (one) [below of = 11, node distance= 3.5cm] {$2'$};
  \node[state, color=blue, fill=blue!10] (p2) [below of = one, node distance= 3.5cm] {$3'$};
  \node[state, color=blue, fill=blue!10] (1) [right of = 11, node distance = 2.45cm] {$1$};
  
  \node[label,color=blue]   (zero)  [below  of=1, node distance= 3.5cm]   {$~$};
  \node[state, color=blue,  fill=blue!10] (p1) [below of = zero, node distance= 3.5cm] {$2$};
  \node [state] (d) [below of = p1] {$d$};
  \node[state]   (top)  [right  of=d]   {$\top$};
  \node[state, color=red, fill=red!10](h1) [left of = i3, node distance = 1.85cm] {$h_1'$};
  \node[state, color=red, fill=red!10](hl) [below of = h1] {$h_{2}'$};
  \node[state, color=red, fill=red!10](h11) [left of = i1, node distance = 3.5cm] {$h_1$};
  \node[state, color=red, fill=red!10](hdouble) [below of = h11] {$h_2$};
  \node[state, color=red, fill=red!10](hltag) [below of = hdouble] {$h_{3}$};
  \node[state] (sink) [below of= hltag, node distance = 7cm] {$\bot$};

  \path (i0) edge      [right]             node {$i_0??$}    (i1);
  \path (i0) edge      [bend left = 30, right]   node {$i_0!!$}    (1);
  \path (i1) edge      [right]             node {$i_1??$}    (i2);
  \path (i1) edge      [ above]   node {$i_1!!$}    (h11);
  \path (h11) edge [red!85] node {$h_1!!$} (hdouble);
  \path (hdouble) edge [red!85] node {$h_{2}!!$} (hltag);
  \path (hltag) edge  [left, color=red] node [] {$h_{3}!!$} (sink);
   \path (hltag) edge [left, bend left = 35, red!85] node {$H'??$} (h11);
   \path (hl) edge [left, color=red] node {$h'_{2}!!$} (sink);
  \path (h1) edge [color=red] node {$h'_{1}!!$} (hl);
  \path (i2) edge      [right]             node {$i_2??$}    (i3);
  \path (i2) edge      [above]   node {$i_2!!$}    (11);
  \path (i3) edge      [right]             node {$i_3??$}    (111);
  \path (i3) edge      [above]   node {$i_3!!$}    (h1);
  \path (111) edge      [left, blue!85]             node [xshift=0.12cm] {$a_1??$}    (double);
  \path (111) edge      [left]   node {$a_1!!$}    (sink);
  \path (double) edge      [left, blue!85]             node [xshift=0.12cm] {$a_{4}??$}    (p3);
  \path (double) edge      [right]   node {$a_{4}!!$}    (sink);
  \path(p3) edge [bend right = 30, blue!85] node [xshift=-0.15cm, right] {$\begin{array}{l}\!H?\!?\! \\ \! H'?\!?\!\end{array}$} (111);
  \path(11) edge [ blue!85] node [xshift=-0.15cm] {$\begin{array}{l}\!H?\!?\! \\ \! H'?\!?\! \\ \! A?\!? \end{array}$} (one);
  \path(one) edge [blue!85] node [xshift=-0.15cm] {$\begin{array}{l}\!H?\!?\! \\ \! H'?\!?\! \\ \! A?\!? \end{array}$} (p2);
  \path(p2) edge [right,bend right =30 , blue!85] node [xshift=-0.15cm]{$\begin{array}{l}\!H?\!?\! \\ \! H'?\!?\! \\ \! A?\!? \end{array}$} (11);
  \path(1) edge [ blue!85] node [xshift=-0.15cm]{$\begin{array}{l}\!H?\!?\! \\ \! H'?\!?\! \\ \! A?\!? \end{array}$} (p1);
  \path(p1) edge [right,bend right =30, blue!85] node[xshift=-0.15cm] {$\begin{array}{l}\!H?\!?\! \\ \! H'?\!?\! \\ \! A?\!? \end{array}$} (1);
  \path(p3) edge [below] node {$c!!$} (sink);
  \path(p2) edge [right] node {$c??$} (d);
  \path(p1) edge [right] node {$c??$} (d);
  \path(d) edge [above] node {$d!!$} (sink);
  \path(d) edge [above] node{$d??$} (top);
  \path(top) edge [loop above] node{$a_{\top}!!$} (top);
		\end{tikzpicture}}
\end{center}	
\caption{
The BP $P_5$ used to show a family of fine BPs with an exponential cutoff.
}\label{fig:bp-complex-family-n-5}
\end{figure}

\commentout{
\begin{theorem}\label{thm:cs-exp}
There exists a family of fine BPs with no characteristic set of polynomial size.    
\end{theorem}}

Next we show that there exist fine BPs for which there is no characteristic set of polynomial size. For the sake of gradual presentation before proving Thm.\ref{thm:cs-exp1} we show that there exist fine BPs with cutoff of size quadratic in the size of the BP. 

\begin{proposition}
There exists a family of fine BPs which requires characteristic sets of quadratic size.
\end{proposition}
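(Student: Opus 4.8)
The plan is to exhibit the family $\{B_{m,n,\ell}\}$ depicted in Fig.\ref{fig:bp-complex-family}, instantiated with $m$ and $n$ \emph{coprime}, both in $\Theta(N)$, and $\ell$ a small constant, so that the total number of states is $N=\Theta(m+n+\ell)$ while the shortest feasible word containing the distinguished action $a_\top$ (the unique action enabled from $\top$) has length $\Theta(mn)=\Theta(N^2)$. The whole argument then reduces to two claims: (i) each $B_{m,n,\ell}$ is fine, and (ii) the shortest word reaching $\top$ is quadratic in $N$; a concluding step turns the latter into a lower bound on characteristic-set size.

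First I would verify fineness. The BP has no hidden states by construction, since every state enables some sending action (including the auxiliary self-loops $b_k!!$ and $a_\top!!$). It also has a cutoff: beyond the number of processes needed to populate and cycle the two chains, adding processes enables no new word, so $L(B^k)$ stabilizes. I would pin the cutoff down explicitly, as the construction is designed so that it is $\Theta(mn)$: each broadcast $a_k$ that advances the ``out'' chain consumes its sender (which falls into $\bot$ while a sibling takes the response $a_k??$ to $\scstatep{k+1}$), so driving that chain around enough times to synchronize with the ``in'' chain burns $\Theta(mn)$ processes.

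The heart of the proof, and the main obstacle, is the quadratic lower bound on the shortest word reaching $\top$. The state $\top$ is entered only via the response $c??$ from $\scstate{m}$, the terminal state of the length-$m$ ``in'' cycle, while $c$ can be broadcast only from $\scstatep{n}$, the terminal state of the length-$n$ ``out'' cycle; hence $\top$ becomes reachable exactly when one process sits in $\scstate{m}$ simultaneously with another in $\scstatep{n}$. I would introduce an invariant on reachable configurations that tracks the two chain positions modulo $m$ and $n$, and show that every broadcast advancing the ``out'' chain (an $a_k$) simultaneously advances the ``in'' chain, so that the two counters are coupled with combined period $\mathrm{lcm}(m,n)=mn$ by coprimality. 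The delicate point is ruling out shortcuts: the unique-sender assumption, the process-burning on each ``out'' advance, and the $H$-resets of the ``out'' chain together restrict which interleavings are feasible, and one must verify that no feasible schedule brings both counters to their terminal positions in fewer than $\Theta(mn)$ broadcasts. This no-shortcut analysis is where the real work lies, and it is the part most likely to require a careful case analysis rather than the clean $\mathrm{lcm}$ slogan.

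Finally I would convert the word-length bound into a characteristic-set lower bound. In the minimal BP equivalent to $B_{m,n,\ell}$ the action $a_\top$ is enabled from a state distinct from all others, so by Lem.\ref{lem:sim-eq} any CS must separate $a_\top$ from every other action via the relation $\apart{\mathcal{S}}$ and fix $\fst(a_\top)$ correctly; by the generator $\algo{G}$ together with Lem.\ref{lem:reachability-exhaust}, this forces the sample to contain a positive triple $(w,k,\true)$ whose word $w$ exercises $a_\top$, i.e.\ reaches $\top$. Since every such $w$ has length $\Omega(mn)=\Omega(N^2)$ and the size of a sample is the sum of the lengths of its words, every characteristic set for $B_{m,n,\ell}$ has size $\Omega(N^2)$, proving the proposition. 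This construction is precisely the warm-up for Thm.\ref{thm:cs-exp1}: stacking $\Theta(n)$ such coprime counters, so that advancing one requires a full cycle of the next, pushes the shortest witnessing word from quadratic to exponential.
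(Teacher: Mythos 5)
Your route is the paper's: the same family $B_{m,n,\ell}$ of Fig.\ref{fig:bp-complex-family}, the same coprimality coupling between the two loops, and the same conversion of a long mandatory $a_\top$-witness into a lower bound on sample size. But there is one concrete parameter choice that breaks the construction as you state it: you take $\ell$ to be a small constant. The helper chain $\scstate{h}_1,\ldots,\scstate{h}_\ell$ is exactly what supplies the $H$-broadcasts that close the outer ($n$-state) loop, and each $h_i$ can be broadcast only once, since its sender advances irrevocably toward $\bot$; hence $\scstatep{n}$ can be lit at most $\ell+1$ times. Your own synchronization analysis shows that lighting $\scstate{m}$ and $\scstatep{n}$ simultaneously requires completing the outer loop $m$ times (by coprimality of $m$ and $n$), so one needs $\ell\geq m-1$. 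With $\ell=O(1)$ the action $a_\top$ is never enabled, no long word is forced into the sample, and the lower bound evaporates. The paper takes $\ell$ of order $m$, which keeps the total state count at $n+m+\ell+4=\Theta(m+n)$, so the quadratic bound survives once you make this repair; but the instance as you parameterize it does not witness the claim.

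The rest of your plan is sound and matches the intended argument. The ``no-shortcut'' concern is resolved in the paper by exactly the mechanisms you list: each $a_k$ sender is consumed into $\bot$, each $h_i$ fires at most once, and every broadcast advances the inner loop, so the two positions are rigidly coupled and $\Theta(mn)$ broadcasts (hence a word of length $\Omega(mn)$, and $\Theta(mn)$ processes) are unavoidable before $a_\top$ is enabled. Your final step, invoking Lem.\ref{lem:sim-eq} and Lem.\ref{lem:reachability-exhaust} to argue that any characteristic set must contain a positive triple whose word exercises $a_\top$, is more explicit than the paper's terse conclusion and is a reasonable way to finish; and your closing observation that stacking $\Theta(n/\log n)$ coprime loops yields the exponential bound is indeed how Thm.\ref{thm:cs-exp1} proceeds.
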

\begin{proof}
We adapt a family of BPs  used in~\cite{JaberJW0S20} for showing a quadratic cutoff for BPs without the restriction of no-hidden states, and for a slightly different definition of cutoff (reaching a particular state). The adaptation for no-hidden states is seamless. To work with our definition of cutoff we needed to introduce some auxiliary states. The family is given in Fig.\ref{fig:bp-complex-family}.

The family is parameterized by three natural numbers $m$, $n$, and $\ell$.
The BP $B_{m,n,\ell}$ 
has $n$ states in the left loop (colored blue), $m$ states in the right loop (colored blue as well), and $\ell$ helper states (colored red), and 4 additional states; overall $n+m+\ell+4$ states. 
The states $\scstate{i}_1$ and $\scstate{i}_2$ send one process to the right loop, one process to state $\scstate{h}_{1}$, and the rest of the processes to state $\scstatep{1}$.
The state $\top$ is used for identifying that at the same moment both states $\scstate{m}$ and $\scstatep{n}$ are lit together.

We use $H??$ and $A??$ as a shortcuts for $\{h_i??~|~i\in [1..\ell]\}$ and $\{a_i??~|~i\in [1..n'\!-\!1]\}$, respectively. 
From all of the states except for $\scstate{m}$ we assume $c??$ takes to the $\bot$ state, 
and $a_\top??$ takes all of the states to the $\top$ state (we didn't add these transitions to avoid clutter). 
Actions in 
$\{b_j \colon j\in [1..m]\}$ are added to satisfy the no hidden states assumption, where $b_j!!$ is a self-loop on state $\scstate{j}$ (colored gray), and  
$b_j??$ is a self loop for any state (not shown). One can see that traversing the left {blue} loop requires at least $n$ processes: one process for each of the $a_i$ transitions, for $1\leq i\leq n-1$, and one process for the $H$ transition. 
Each $H$ transition requires  one of the $h_i$ transitions to be taken. The structure of the $h_i$ sending transitions thus restrict the number of times $\scstatep{n}$ can be reached to $\ell+1$ (since it can be reached once without executing an $h_i$).  
In order to enable $a_\top$
there must be one process in state $\scstatep{n}$ that sends $c!!$ and one process in state $\scstate{m}$ responding to it. Therefore if $n$ and $m$ are co-prime, this can occur only once the left loop is traversed $m$ times (which requires $m$ helper states). Thus an overall of $n\cdot m$ processes are required to make a word using action $a_\top$ feasible. Accordingly, the shortest word using the $a_\top$ action is of length at least $n\cdot m$. This shows that there exist no characteristic sets for this family of size less than quadratic in the size of the given BP.
\end{proof}

To obtain an exponential size lower bound, instead of having two such loops, we have $k$ such loops
of sizes $p_1,p_2,\ldots,p_k$, where $p_i$ is the $i$-th prime number, as explained in the proof of Thm.\ref{thm:cs-exp1}.

\csexp*
 \begin{proof}[Proof of Thm.\ref{thm:cs-exp1}]
With every $n\in\mathbb{N}$ we associate a BP $P_n$ that has a loop for every prime smaller or equal to $n$. Denote these primes $p_1 \!<\! p_2\! <\!\ldots \!<\!p_k\! \leq\!  n$. Let $m\!=\!p_1\!\cdot \!p_2 \cdots p_k$ be the multiplication of all these primes. There is a special action $a_\top$ detecting synchronization among these loops, that is enabled from a state $\top$ that can be reached only once there is a process at the end of each of these $k$ loops.

Roughly speaking, the synchronization of the traversal around the different loops is controlled by the biggest loop (which sends $a_i!!$ actions), and the rest of the loops responding (following $a_i??$). 
To enforce a cutoff exists, there are additional helper states that limit the number of times biggest loop the ($p_k$) can be traversed to $m/p_k$ times, which is the minimum we need to reach $\top$. This is done by requiring an helper transition to traverse the transition closing the $p_k$ loop.

In order to avoid adding $m/p_k$ such helper states, we suffice with $p_1,p_2,\ldots,p_{k-1}$ helper states that also synchronize to count to $m/p_k$.
The BP for $n=5$ is given in Fig.~\ref{fig:bp-complex-family-n-5}. As before, $A??$, $H??$ and $H'??$ are abbreviations for all $a_i$, $h_i$ and $h'_i$ actions. 
You can see the loops corresponding to $p_1=2$, $p_2=3$, $p_3=5$ painted in blue.
The helper states are painted red.
There are additional auxiliary states $i_0,i_1,\ldots,i_{2(k-1)-1}$  (up to $i_3$ for $n=5$ for which $k=3$) enabling an initialization sequence 
$u_0=i_0i_1\cdots i_{2(k-1)-1}$  that is used to send one process to each of the {$k-1$ smaller} blue loops {i.e. the loops of $p_1,\ldots,p_{k-1}$} and one to each of the red loops/sequences. All additional processes {reach} the beginning of the biggest loop $p_k$ (for the example in Fig.\ref{fig:bp-complex-family-n-5}, it is state $\scstate{1}''$). 
{Auxiliary actions added to meet the no hidden states assumption are not shown.}

The sequence enabling $a_\top$ will take the following form.
Let $u=a_1a_2\cdots a_{p_k-1}$ and $v=u h_1  u h_2 u \cdots  u h_{p_{k-1}-1} u$. Then the desired sequence is  
$w=v h'_1 v h'_2 v\cdots h'_{p_1} v h_{p_2} u$.
After this sequence, all final states of the loops will be lit at the same time. Then, we can create a termination sequence $u_t$ ensuring there is a moment in which there is a process in the end of each of the $k$ loops. 
In the example of $P_5$ if the states {$\scstate{2}$, $\scstate{3}'$, $\scstate{5}''$} are all lit at the same time, then the sequence $cd$ will enable $a_\top$. In the general case if we have $k$ primes, the termination sequence would be of length $k-1$. 
Hence the sequence $u_0wu_ta_\top$ is feasible.

Note that the size of BP $P_n$ is quadratic in $n$.
Since all primes are of size $2$ at least, and since the number of primes of size $n$ or less is $\Theta(n/\log n)$ the number of 
processes required to enable $a_\top$ is {at least} $2^{\Theta(n/\log n)}$, and so is the size of a shortest word ($u_0wu_ta_\top$) that includes the $a_\top$ action.
Hence, a word of length exponential in the size of the BP is required to be in the sample, entailing this family has no characteristic set of polynomial size.    
\end{proof}

\section{Complete Proofs for Section~\ref{sec:conss-np-hard}}\label{proof:dfa-2-bp} 

\dfatobp*

\begin{proof}[Proof of Lem.\ref{lem:dfa-to-bp}]
We start with the first item.
Let $D=(\Sigma,Q,\iota,\delta,F)$ be a minimal DFA for $L$.
We build the BP $B=(A,S,s,R)$ as follows, see Fig.~\ref{fig:passive-np-hard}. 
The states are $S = Q \cup \{ \scstate{i} , \scstate{c},\scstate{x}, \top,\bot\}$, the initial state is $s\!=\!\scstate{i}$. The actions
$A\!=\Sigma\cup Q \cup\{i,\$,\bot,\top,x\}$.
The transitions are as follows:
$(\scstate{i}, i!!, {\iota})$, and $(\scstate{i}, i??, \scstate{c})$.
For every $\sigma\in \Sigma$:
$(\scstate{c}, \sigma!!, \scstate{c})$ and 
$(\scstate{c}, \$!!, \scstate{x}), (\scstate{c}, \$??, \scstate{x})$.
For every $(q,\sigma, q')\in\delta$ : $(q,\sigma??, {q'})$.
For every $q\in Q\setminus F$: $(q,\$??, \bot)$ and for every $q\in F$: $(q,\$??, \top)$.
Finally, $(\scstate{x},x!!,\scstate{x}),(\scstate{x},x??,\scstate{x})$,
$(\scstate{x},\top??,\top), (\scstate{x},\bot??,\bot)$,
$(\top,\top!!,\top)$ and $(\bot,\bot!!,\bot)$.
To adhere to the no-hidden states requirement we can add $(q, q!!,q)$ for all $q\in Q$ and every response that isn't defined is a self loop. 

First we note that with one process the set of $A$-feasible words is $i$.
With two processes, after executing $i$ we have one process in $\iota$ and one in $\scstate{c}$. The process in state $\scstate{c}$ can execute any word over $\Sigma$ and the other process would have to simulate the DFA on this word. After $\$$ is fired
a sequence of $x$'s is feasible followed by either a sequence of $\top$'s or a sequence of $\bot$'s, depending whether the DFA has reached an accepting state or not.
Note that with three  or more processes the same set of words is feasible, therefore the cutoff is $2$ and $B$ is fine.
It follows that $w\in L$ if and only if $(iw\$\top, 2)$ is feasible in $B$ and similarly 
 $w\notin L$ if and only if $(iw\$\bot, 2)$ is feasible in $B$. This proves the first two items.

For the third item, we first claim that the requirement of the $A$-feasible words require at least $5$ states.
Indeed, since with one process only $i$ is $A$-feasible, the only action feasible from the initial state, call it $\scstate{i}$ is $i$. Since $ii$ is infeasible, $i!!$ from the initial state must reach a new state, and since no other $A$-feasible words are feasible with one process, this state has no $A$-feasible actions. 
With two processes, after $i$ we can also fire all actions in $\Sigma \cup \{\$\}$, and so $i??$ from the initial state must reach a state that enables {all of them}, call it $\scstate{c}$.
Afterwards, once  a $\$$ has been fired neither $i$ nor $\Sigma$ or $\$$ are feasible, but only $x$ and either $\top$ or $\bot$. So $\$!!$ as well as $\$??$, must have reached new states that enables these.
Due to the fact that after $x$ nothing but $\top$ or $\bot$ is feasible, we get that $x$ is fired from a different state. Finally, since after $\top$ only $\top$'s are feasible, and similarly after $\bot$, these actions have to be enabled from different states as well, call them $\top$ and $\bot$. 
So there are at least $5$ states, and they possess transitions as shown in Fig.~\ref{fig:passive-np-hard}.

Now, assume towards contradiction that the minimal DFA for $L$ has $n$ states, and there is a BP $B$ satisfying these requirements with less than $n+5$ states. We claim that if $B$ has less than $n+5$ states then we can build a DFA for $L$ with less than $n$
states.
First note that as argued earlier, after firing $i$ there exists a process in a state, call it $\iota$, from which none of the actions in $A$ can be fired. 
Moreover, if there is more than one process, then the rest of processes after executing $i$ are in a different state, the one we called $\scstate{c}$.  
Assuming first, the responses on actions in $\Sigma$ do not reach one of the identified 5 states, it is clear that they must simulate the transitions of the DFA, as otherwise we would get a contradiction to the minimality of the DFA.
Removing this assumption, we note that out of the five states, responses on $\Sigma$ actions can only get to $\scstate{c}$ as otherwise the set of $A$-feasible words will not be a prefix of $i (\Sigma)^* \$ x^* (\top^* \cup \bot^*)$ as required. 
Assuming that this happens for some $\sigma\in\Sigma$, then at that point all the processes will be in $\scstate{c}$, which means that after taking $\$$ neither $\bot$ nor $\top$ is feasible, contradicting requirement (\ref{lem:dfa-to-bp:rel-to-dfa}). This concludes that the responses on actions in $\Sigma$ do behave the same as in the minimal DFA.
Regarding the actions in $A'\setminus A$, we can assume that sending and receiving transitions on every $a \in A'\setminus A$ behave such that, when starting from a configuration where only states $\scstate{c}$ and $\fst(a)$ are lit, then also after $a$ the same states are lit.\footnote{This means that either the transitions on $a$ are all self-loops, or the response from $\scstate{c}$ goes to $\fst(a)$, and at least one of the sending and receiving transitions on $a$ from $\fst(a)$ goes to $\scstate{c}$ (and is a self-loop otherwise).}
Otherwise we would get a contradiction either to requirement (\ref{lem:dfa-to-bp:rel-to-dfa}) or to minimality of the DFA:
\begin{itemize}
    \item First, note that if $\scstate{c}$ is not lit after $a$, then requirement (\ref{lem:dfa-to-bp:rel-to-dfa}) cannot be satisfied.
    \item If a transition on $a$ from $\fst(a)$ or $\scstate{c}$ goes to an auxiliary state other than $\scstate{c}$, then we also get a contradiction to requirement~(\ref{lem:dfa-to-bp:rel-to-dfa}).
    \item If a transition on $a$ from $\fst(a)$ goes to a non-auxiliary state other than $\fst(a)$, then we get a contradiction to minimality of the DFA, since then from this state exactly the same words in $\Sigma^*$ would be accepted as from $\fst(a)$. 
\end{itemize}
This concludes the proof.
\end{proof}

\bpnphard*

\begin{proof}[Proof of \ref{thm:bp-consistency-nphard}]
The proof is by reduction from DFA-consistency which was shown to be NP-hard in \cite{Gold78},
using the idea in Lem.\ref{lem:dfa-to-bp}. 
Given an input to DFA-consistency, namely a sample $\mathcal{S}$ and a number $k$, we produce 
a sample $\mathcal{S}'$ and $k'=k+5$ as an input to BP-consistency as follows.

We start by putting the triples $(i,1,\true)$, $(ii,2,\false)$, $(ix,2,\false)$, $(i\top,2,\false)$,
$(i\bot,2,\false)$. 
 Then for all ${\sigma \in \Sigma}$ we add $(i\sigma i,2, \false),$ $(i\sigma x,2, \false),$ $(i\sigma \$xi,2, \false)$ to $\mathcal{S}'$.
Then, for each $(w,\true)\in\mathcal{S}$ 
we add $(iw\$\top\top,2)$ and $(iw\$xx\top\top,2)$ with 
positive label to $\mathcal{S'}$, and the following words with 
negative label to $\mathcal{S'}$: $(iw\$\top\bot,2)$,
$(iw\$\bot\top,2)$, 
$(iw\top,2)$, $(iw\$\bot,2)$,
$(iw\$\top i,2)$,
$(iw\$\top x,2)$,
and $(iw\$\top\sigma,2)$, $(iw\$\sigma,2)$ for any $\sigma\in\Sigma$.

For each $(w,\false)\in\mathcal{S}$ 
we add $(iw\$\bot\bot,2,)$ and $(iw\$xx\bot\bot,2)$
with 
positive label to $\mathcal{S'}$,
and the following words with 
negative label to $\mathcal{S'}$: $(iw\$\top\bot,2)$, $(iw\$\bot\top,2)$, 
$(iw\bot,2)$, $(iw\$\top,2)$, $(iw\$\bot x,2)$,
and $(iw\$\bot\sigma,2)$, $(iw\$\sigma,2)$ for any $\sigma\in\Sigma$.

From similar arguments as the proof for the third item in  
Thm.~\ref{lem:dfa-to-bp} we can show that the reduction is valid,
namely there is a DFA for the given sample $\mathcal{S}$ with less than
$k$ sates if and only if there is a BP for the constructed sample $\mathcal{S}'$
with less than $k+5$ states.
\end{proof}

\subsection{Alternative proof that BP consistency is NP hard}
Below we provide a direct proof  by reduction from the problem of \emph{all-eq-3SAT} that is inspired by a recent proof on the hardness of DFA consistency~\cite{Lingg2024}. 

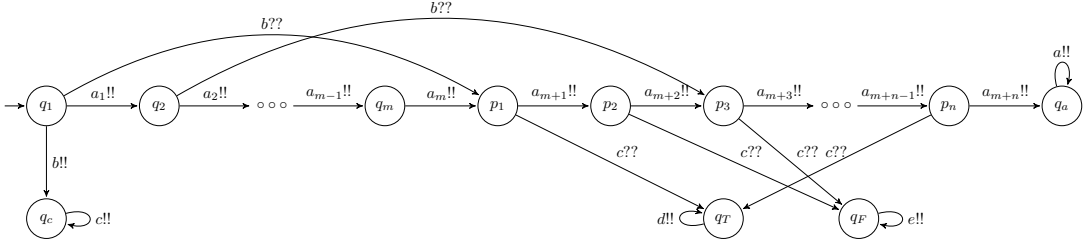
\begin{figure*}
\begin{center}
	\scalebox{0.6}{
		\begin{tikzpicture}[->,>=stealth',shorten >=1pt,auto,node distance=2.5cm,semithick,initial text=]
		
		\node[state,initial]          (q0)   {$\state{1}$};
		\node[state]    (qc) [below of=q0]  {$\state{c}$};		
		\node[state]   (q1)  [right of=q0 ]   {$\state{2}$};
		\node[label]   (q2)  [right  of=q1]   {$\circ\circ\circ$};
		\node[state]   (qmm1)  [right  of=q2]   {$\state{m}$};
		\node[state]   (qm)  [right  of=qmm1]   {$\statep{1}$};		
		\node[state]   (qm1)  [right  of=qm]   {$\statep{2}$};			\node[state]   (qm2)  [right  of=qm1]   {$\statep{3}$};			
		\node[label]   (qmc)  [right  of=qm2]   {$\circ\circ\circ$};		
		\node[state]   (qmn)  [right  of=qmc]   {$\statep{n}$};		
		\node[state]   (qmn1)  [right  of=qmn]   {$\state{a}$};		
		\node[state]   (qtrue)  [below of=qm2]   {$\state{T}$};												
		\node[state]   (qfalse)  [right  of=qtrue, node distance=3cm]   {$\state{F}$};

		\path (q0) edge                   node {$a_1!!$}    (q1);
		\path (q1) edge                   node {$a_2!!$}   (q2);
		\path (q2) edge  node {$a_{m-1}!!$} (qmm1);
		\path (qmm1) edge  node {$a_{m}!!$} (qm);
		\path (qm) edge  node {$a_{m+1}!!$} (qm1);
		\path (qm1) edge  node {$a_{m+2}!!$} (qm2);
		\path (qm2) edge  node {$a_{m+3}!!$} (qmc);  
		\path (qmc) edge  node {$a_{m+n-1}!!$} (qmn);
            \path (qmn) edge  node {$a_{m+n}!!$} (qmn1);  

		\path (q0) edge                   node {$b!!$}    (qc);				
		\path (qtrue) edge [loop left] node {$d!!$} (qtrue);
		\path (qfalse) edge [loop right] node {$e!!$} (qfalse);	
		\path (qc) edge [loop right] node {$c!!$} (qc);			
		\path (qmn1) edge [loop above] node {$a!!$} (qmn1);

		\path (q0) edge [bend left] 	node	{$b??$}(qm);
		\path (q1) edge [bend left] 	node	{$b??$}(qm2);		
		
		\path (qm) edge   	node	{$c??$}(qtrue);			
		\path (qm1) edge   	node	{$c??$}(qfalse);		
	      \path (qm2) edge   	node	{$c??$}(qfalse);		  
		\path (qmn) edge [above]  	node	{$c??$}(qtrue);								
		\end{tikzpicture}}
\end{center}	
\caption{A BP consistent with $\varphi=C_1\wedge C_2 \wedge \ldots \wedge C_{m}$ where 
$C_1=(x_1\vee x_2 \vee x_5)$, $C_2=(\overline{x_2}\vee\overline{x_3}\vee\overline{x_7})$ and the satisfying assignment
$x_1 = \true, x_2=\false, x_3=\false, \ldots,x_{n}=\true$. Responses that are not shown are self-loops.}\label{fig:bp-np-hard}
\end{figure*}

\bpnphard*

\begin{proof}
    The problem of 
    \emph{all-eq-3SAT} asks given an all-eq-3CNF formula $\varphi$ whether it has a satisfying assignment.  Where an all-eq-3CNF formula is a 3CNF formula where in each clause either all literals are positive or all literals are negative. This problem is known to be NP-complete.   
	 
    Let $\varphi=C_1 \wedge C_2 \wedge \ldots \wedge C_{m}$ be an all-eq-3CNF formula over a set of variables $V=\{x_1,x_2,\ldots, x_{n}\}$.
    We take the number $k$ to be $n+m+4$.
    For the alphabet of the sample we take $A=\{ a_i ~|~ 1\leq i \leq m+n\} \cup\{a,b,c,t,\!f\}$.
	We devise a sample $\mathcal{S}$ using four disjoint sets of words: $P_1$, $P_2$, $N_1$, $N_2$. The sample then consists of the triples $\{ (w,1,\true) ~|~ w\in P_1 \} \cup \{ (w,2,\true) ~|~ w\in P_2 \} \cup \{ (w,1,\false) ~|~ w\in N_1 \} \cup \{ (w,2,\false) ~|~ w\in N_2 \}$.
    The sets are defined as follows: 
    $$\begin{array}{l@{\,}l}
	P_1\,= & \{  a_1a_2\ldots a_{m+n}aa\} \cup \{bcc \}\\
	P_2\,= & \{a_1a_2\ldots a_ibctt ~|~ 1\leq i \leq m,\ \  C_i \text{ is positive} \}\ \cup \\  & \{a_1a_2\ldots a_ibc\!f\!f~|~ 1\leq i \leq m,\ \  C_i \text{ is negative} \}\\
    N_1\,= & \{ a_1a_2\ldots a_i a_j~|~ 1\leq i < m+n, \  j \neq {i+1} \}\  \cup\ \\ &  \{  a,ba,bb,c,t,\!f,bca,bct,bc\!f \}\   \\
    N_2\,= &  \{ a_1\ldots a_i b a_j a_{j+1} \ldots a_{m+n} ~|~ i \leq j \leq m \}\ \cup \\
		   & \{ a_1 \ldots a_i b a_{j+1} \ldots a_{m+n}~|~i \leq m,j > m,x_j\notin C_i \} \ \cup \\ 
     & \{a_1a_2\ldots a_ibct\!f ~|~ 1\leq i \leq m  \}\ \cup \\ 
     & \{a_1a_2\ldots a_ibc\!ft~|~ 1\leq i \leq m   \}
    \  \cup \\ 
    
     & \{a_1a_2\ldots a_i x~|~ 0< i \leq m+n,\ x \in \{c,t,\!f\} \}\ \cup \\
            & \{a_1a_2\ldots a_i a~|~ 0< i < m+n \}\ \cup \\
            & \{ a_1 \ldots a_i b a_{j+1} \ldots a_{k}a~|~i\leq m,\ j\geq m,\ k<m+n \}\ 
     \\
 \end{array}$$

    Suppose there is a satisfying assignment for $\varphi$. We can construct a BP with $k$ states consistent with the sample as shown in Fig.~\ref{fig:bp-np-hard}.
    The  states  $\state{1},\state{2},\ldots,\state{m}$ correspond to the clauses, the states $\statep{1},\statep{2},\ldots,\statep{n}$ correspond to the variables, the states $\state{T}$ and $\state{F}$ correspond to truth assignments $\true$ and $\false$, and there are two additional states $\state{c}$ and $\state{a}$.
    The initial state $\state{1}$ has transition with $b!!$ to $\state{c}$ which has a self transition on $c!!$.
    State $\state{a}$ has a self-loop on $a!!$.

    All states $\statep{j}$ corresponding to variables $x_j$ with a $\true$ (resp. $
    \false$) assignment have a $c??$ response to state $\state{T}$ (resp. $\state{F}$).
    For $1\leq i\leq m$, state $\state{i}$ corresponds to clause $C_i$ and to the word $a_1\ldots a_{i-1}$ (for $i=1$ the corresponding word is the empty word). 
    From each state $\state{i}$ we add a $b??$ transition to a state $\statep{j}$ corresponding to a satisfying literal in the clause of $C_i$.
    This makes the  words $a_1a_2\ldots a_ibctt$ (resp. $a_1a_2\ldots a_ibc\!f\!f$) feasible in $B^2$ for every positive (resp. negative) clause $C_i$ for which $x_j\in C_i$ and $x_j$'s assignment is $\true$ (resp. $\false$) conditioned a corresponding $b??$ transition from $C_i$ to $x_j$ exists. 
    It can be verified that this BP agrees with the sample.

Next, we show that if $\varphi$ has no satisfying assignment then any BP with $k$ states or less, disagrees with the sample. We show that the sample implies there are at least $k$ states, associated with actions as shown in Fig.\ref{fig:bp-np-hard}. Then, similar arguments regarding satisfaction of the clauses, imply that if no satisfying assignment exists, then any BP with these states would disagree with the sample. 

To see that such $k$ states are required, we note that for every BP consistent with $P_1$, if for some $\sigma \in A$ there is a word  in $P_1$ with $\sigma\sigma$ as an infix, there must be a state with self-loop on $\sigma$, since there is only one processes in the system. 
 Another thing to note is that if a word is feasible with $n$ processes then it is also feasible with any $m\geq n$ processes. In the other direction, if a word is infeasible with $n$ processes then it is also infeasible with any $m\leq n$ processes (see Lem.\ref{clm:monotonicity}). Moreover, if $w$ is feasible then any prefix of it is feasible.
 
The first set in $N_1$ prescribes that in any consistent BP, with a single process, the action $a_i$ may only be followed by $a_{i+1}$.
And due to $N_1$, with respect to $P_1$ this guarantees that any BP consistent with $\mathcal{S}$ has at least $m\!+\!n$ states, one for each of the $a_i$ actions.
Since $bcc$ is feasible yet $a_1a_2\ldots a_i c$ is infeasible for any $i \in [1,...,m\!+\!n]$ (see $N_2$, line 5), it follows that $b!!$ reaches a state where $c$ is enabled from and this state is different from any of the states $a_i$'s are feasible from. Hence, another state for the $c$ is needed.
Similar arguments taking that  $a_1a_2\ldots a_{m_n} a$ is in $P_1$ yet $a_1a_2\ldots a_i a$ is in $N_2$ for $i \in [1,...,m\!+\!n\!-\!1]$ (see line 6) show that another state for $a$ is required. Since $bcc$ is feasible (in $P_1$) while $bca$ is not (in $N_1$) the $a$ and $c$ action must be from different states. Hence so far we have identified $n\!+\!m\!+\!2$ states. Two more states are required for actions $t$ and $\!f$ since either $a_1a_2\ldots a_i bctt$  or  $a_1a_2\ldots a_i bc\!f\!f$  
in $P_2$ depending whether the clause $C_i$ is positive or negative, hence
$tt$ or $\!f\!f$ is always feasible after $a_1a_2\ldots a_ibc$, 
while  $a_1a_2\ldots a_i bct\!f$ and  $a_1a_2\ldots a_i bc\!ft$ are in $N_2$ thus $t$ and $\!f$ cannot be from the same state. Similar arguments show that these actions cannot be from the states the previous actions are. Thus $m\!+\!n\!+\!4$ states are required. With this structure, from similar arguments to the first direction, if the BP would agree with all the words in the sample, it would contradict that $\varphi$ is not satisfiable.
\end{proof}

\section{Complete Proofs for Section~\ref{sec:BP-predict}}\label{proof:bp-predict} 

\prednphard*
\begin{proof}[Proof of Thm.\ref{thm:bp-predic-hard}]
The proof is via a reduction from the class $\mathcal{D}$ of intersection of DFAs, for which Angluin and Kharitonov~\shortcite{AngluinK95} have shown that $\mathcal{D}$ is not polynomially predictable under the same assumptions.

We show that given a predictor $\algo{B}$ for fine BPs we can construct a predictor $\algo{D}$ for the intersection of DFAs as follows.
First, we show how to associate with any given set 
 $D_1,D_2,\ldots,D_k$ of DFAs  a particular BP $B$.
Let $D_i=(\Sigma,Q_i,\iota_i,\delta_i,F_i)$ and assume  without loss of generality that the states of the DFAs are disjoint and the DFAs are complete (i.e. from every state there is an outgoing transition on every letter {in $\Sigma$}).
We construct a fine BP $B=(A,S,s,R)$ with cutoff $k+1$
as follows. See illustration in Fig.\ref{fig:active-hard1}.

Let $Q=\cup_{i\in[1..k]}\,Q_i$.
The set of states $S$ is $Q  \cup \{\bot,\sstart, \scstate{c},\scstate{x}\} \cup \{\scstate{g}_i,\scstate{h}_i~|~i\in[1..k]\}$.
The initial state is $\scstate{h}_1$. 
Let $\Sigma_h=\{h_j~|~j\in[1..k]\}$ and $\Sigma_g=\{g_j~|~j\in[1..k]\}$ .
The set of actions  $A$ is $\Sigma \ \cup\ \Sigma_h\ \cup \Sigma_g \cup\ \{\$,\bot,x, s\}$.

The transitions are as follows:
for every ${i\in [1..k]}$ we have $(\scstate{h}_{i},h_i!!,\scstate{g}_i)$.
The receiving transitions on $h_i$ are $(\scstate{h}_i,h_i??,\scstate{h}_{i+1})$ for every $i\in[1..k\!-\!1]$,
and $(\scstate{h}_k,h_k??,\sstart)$.
For state $\sstart$, the transitions are as follows: $(\sstart, s!!, \scstate{c})$, $(\sstart, s??, \scstate{c})$.
For every  $\sigma\in\Sigma$ we have $(\scstate{c},\sigma!!, \scstate{c})$.
The transitions from the $\scstate{g}_i$'s states are $(\scstate{g}_i,s??,\iota_i)$ for every $i\in [1..k]$.
For every $(q,\sigma,q')\in\delta_i$ for $q,q'\in Q_i$ we have $(q,\sigma??,q')$.
For the $\bot$ state we have $({\bot},\bot!!, {\scstate{x}})$, $({\bot},\bot??, {\scstate{x}})$ and $(\scstate{x},x!!, {\scstate{x}})$.
Finally, for every $i\in[1..k]$ and 
 $q\in Q_i$ we have $(q,$\$??$,\scstate{x})$ if 
$q\in F_i$ and $(q,$\$??$,\bot)$ otherwise. 
Response transitions that are not specified are self-loops.

To satisfy the requirement of no-hidden states, we can assume that
the $\scstate{g}_i$ states have a self loop on $g_i!!$, and
every state $q$ of one of the DFAs has a self loop on $q!!$ (not shown in the figure to avoid clutter). The set of actions used would be $A\cup Q$.  
Note that this structure enforces the prefix to be of the form  $h_1u_1h_2u_2\ldots h_ku_k$
where $u_i \in \{g_1,g_2,\ldots,g_i\}^*$ and $k$ processes are required to enable $h_k$.
In order to enable $s$ an additional process is required. At this point any word $v$ from $\Sigma^*$ is feasible,
and there will be exactly one process in the initial state of each of the DFAs, simulating its run on $v$. Upon the letter $\$$ only $\bot$ or $x$ are feasible. Moreover, if there exists a DFA that rejects the word then $\bot$ is feasible, otherwise, no $\bot$ is feasible, only $x$.

Next we show how the predictor $\algo{D}$ for the intersection of DFAs uses the predictor 
 $\algo{B}$ for BPs to satisfy his task.
 Note that $\algo{D}$ has an oracle for the intersection of the DFAs $D_1,\ldots,D_k$ at his disposal, whereas  $\algo{B}$ expects answers regarding $B$. When $\algo{B}$ asks a \mq\ about $(w,n)$, i.e. whether $w$ is feasible in $B^n$ then $\algo{D}$ behaves as follows. 

Let $w'$ be the word obtained from $w$ by removing letters that are not in $A$.
If $w'$ is not a prefix of $h_1u_1h_2u_2\ldots u_{k-1}h_k u_k s \Sigma^* \$ \{\bot,x\}^*$  where $u_i\in \{g_1,g_2,\ldots,g_i\}^*$, or the number of $\bot$ letters exceeds one, then $\algo{D}$ returns the answer ``no" to $\algo{B}$.
Else, if $w$ is such a prefix and $w$ does not contain $\bot$, $\$$ or $x$, 
then $\algo{D}$ returns to $\algo{B}$ the answer ``yes" iff $n>k$ or $w'$ is a prefix of $h_1u_1h_2u_2\ldots u_{n-1}h_n u_n$ where the $u_i$'s are as above.
Otherwise, $w$ is such a prefix that contains $\$$.
If $n\leq k$ we return ``no". Otherwise,
 let $v$ be the maximal infix of $w'$ that is in $\Sigma^*$.
The predictor $\algo{D}$ asks a MQ about $v$. Assume it receives the answer $b$. 
If $w$ has $\bot$, it answers ``yes'' iff $b$ is ``no".
Otherwise it answers ``yes".

We claim that $\algo{B}$ receives correct answers. Indeed, the first couple of checks verify that $w$ is feasible according to the construction that builds $B$ from the given DFAs. For the last check, first note that if $w$ does not contain $\bot$ then $w$ is feasible. Otherwise, if $w$ does contain $\bot$ then it is infeasible if $v$ is accepted by all the DFAs
and is feasible if at least one DFA rejects it.

The next ingredient is to show how draw queries of $\algo{B}$ are simulated by $\algo{D}$. 
When $\algo{B}$ makes a $\dr$ query, then $\algo{D}$ makes a $\dr$ query. Suppose it receives the answer $(v,b)$. Then  
it passes $((h_1h_2\ldots h_k s v \$\bot,k+1),\neg b)$ to $\algo{B}$. 
Last, when $\algo{B}$ asks for a word $w$ to predict,
then $\algo{D}$ asks for a word $v$ to predict and flips the answer of the prediction of $\algo{B}$ on $(h_1h_2\ldots h_k s v \$\bot,k+1)$.
Note that if $v$ is accepted by the intersection of the DFAs, and $w=h_1h_2\ldots h_k s v \$\bot$ then $w$
is infeasible, and if $v$ is rejected then $w$ is feasible.
It follows that given $\algo{B}$ classifies the predicted word correctly then so does $\algo{D}$. Therefore, if $\algo{B}$ is a polynomial predictor for fine BPs then $\algo{D}$ is
a polynomial predictor for the intersection of DFAs.
\end{proof}

\end{document}